\documentclass{article}
\usepackage{graphicx} 
\usepackage[utf8]{inputenc}
\usepackage{braket}
\usepackage{amsfonts}
\usepackage{amsmath}
\usepackage{amssymb}
\usepackage{amsthm}
\usepackage{mathrsfs}
\usepackage[a4paper,total={6in,9in}]{geometry}
\usepackage{setspace}
\usepackage{verbatim}
\usepackage{dsfont}
\usepackage{tikz}
\usepackage{tikz-cd}
\usepackage[dvipsnames]{xcolor}
\usetikzlibrary{arrows.meta}
\usetikzlibrary{calc}
\usetikzlibrary{patterns}
\usepackage{comment}
\usepackage{centernot}
\usepackage{arydshln}
\usepackage{mathtools}
\usepackage{afterpage}

\usepackage{appendix}
\usepackage{url}
\usepackage{bm}

\usepackage[style=numeric, sorting=none]{biblatex} 
\newtheorem{theorem}{Theorem}[section]
\newtheorem{proposition}[theorem]{Proposition}
\newtheorem{lemma}[theorem]{Lemma}
\newtheorem{corollary}[theorem]{Corollary}
\newtheorem{definition}[theorem]{Definition}
\newtheorem{conjecture}[theorem]{Conjecture}

\newtheorem{example}[theorem]{Example}

\newcommand{\Aut}{\mathrm{Aut}}
\newcommand{\Iso}{\mathrm{Iso}}
\newcommand{\diam}{\mathrm{diam}}

\newcommand{\loc}{\mathrm{loc}}
\newcommand{\bnd}{\mathrm{bnd}}

\newcommand{\phys}{\mathrm{phys}}

\newcommand\blfootnote[1]{%
  \begingroup
  \renewcommand\thefootnote{}\footnote{#1}%
  \addtocounter{footnote}{-1}%
  \endgroup
}

\title{Pathology-Free Real-Space Renormalization Group Theory on an Inverse Limit Space}
\bigskip
\author{Fabio Arz}

\date{Albert Einstein Center for Fundamental Physics\\
        Institute for Theoretical Physics\\
        University of Bern, Sidlerstrasse 5, CH-3012 Bern, Switzerland\\[2ex]
        \today}

\begin{document}

\maketitle

\begin{abstract}
    It has been over fifty years since Kenneth Wilson had his Nobel-prize-winning ideas on the renormalization group. In this time frame, despite many attempts, no mathematical results have implemented Wilson's vision to a satisfactory degree. Although a number of predictions stemming from the renormalization group framework have been proven to date, these proofs usually rely on alternative ideas and do not cover the full predictive power of Wilson's renormalization group. The discovery of pathologies within rigorous implementations of renormalization group transformations have further reduced the enthusiasm of the mathematics community for this subject. This paper presents a novel framework to implement Wilson's original idea in a rigorous manner based on inverse limits of finite systems. Doing so, one avoids these pathologies and hence obtains a mathematically sound framework for Wilson's renormalization group. This opens up a completely new path to tackle some of the existing problems in rigorous renormalization group theory like the existence of non-trivial fixed points.
\end{abstract}

\blfootnote{*E-mail: fabio.arz@unibe.ch}

\newpage

\section{Introduction}

The origin of the renormalization group (RG) stems from the late 1940s when physicists proposed renormalization techniques in order to obtain control over divergences in the nascent field of quantum field theory. Fast forward roughly twenty years to 1966 when Leo Kadanoff \cite{Kadanoff} introduced the heuristics of block-spin transformations to investigate the $2\mathrm{D}$ Ising model close to the critical point. The method behind these transformations is to partition the lattice into blocks and then average the spins per block into a single block-spin. In the beginning of the 70s Kenneth Wilson \cite{WilsonI},\cite{WilsonII} combined these block-spin transformations with the concept of RG stemming from quantum field theory into a complete framework with applications in both classical statistical mechanics as well as quantum field theory. His vision was that consecutive applications of so called renormalization group transformations (RGTs) induce a flow in the space of couplings, mapping models to their counterparts viewed on coarser length scales. While these transformations may change the microscopic details by a lot, they leave the macroscopic behaviour essentially untouched. Wilson then postulated that these flows converge to scale (and even conformal) invariant fixed points which determine the macroscopic behaviour of all models in their basin of attraction. This then divides the possible macroscopic properties into a set of universality classes each depending on its own fixed point. Among many other things, this framework allowed for an explanation of previously obtained experimental data that wildly different systems behave very similarly at criticality. Although, from a heuristic point of view, Wilson's RG-theory was (and has been so to this day) extremely successful, it possessed a number of issues preventing it from becoming a fully rigorous concept.\smallskip 

Early in the development of a rigorous theory for RGTs on lattice models, several mathematical physicists noticed certain inconsistencies in their investigations (see for example Griffiths and Pearce \cite{Griffiths1979} as well as Israel \cite{Israel1981}). In a landmark paper for rigorous RG, van Enter, Fernández and Sokal \cite{van_Enter_1993} summarized and extended these findings and gave a proper explanation on the origin of the pathological behaviour of such transformations. In order to understand these pathologies, we first have to find a common ground for the definition of real-space RGTs. As there are dozens of introductory texts on the theory of lattice models, this part is kept rather short. A very detailed overview can be found in the text book by Friedli and Velenik \cite{friedli_velenik_2017}. For now, let us concentrate our attention on the standard hypercubic lattice with vertex set $\mathds{Z}^d$. In case of the Ising model, we shall then consider the single spin-space $\Omega_0:=\{-1,1\}$. The physical configuration space is obtained by assigning a spin to each vertex, i.e.\ $\Omega:=\Omega_0^{\mathds{Z}^d}$. The last ingredient we require is the $\sigma$-algebra\footnote{In case the reader is not very familiar with measure theory, one can ignore these things as they will not be relevant beyond this short introduction.}
\begin{equation}
    \mathscr{F}:=\sigma(\{\mathcal{A}\subset 2^\Omega\,:\,\mathcal{A}\,\,\mathrm{is\,\,finitely\,\,generated}\}).
\end{equation}
Finitely generated means that there exists some finite subgraph $\Lambda\subset\mathds{Z}^d,|\Lambda|<\infty$ together with some\footnote{$\Omega_\Lambda$ refers to the space $\{-1,1\}^\Lambda$.} $A\subset2^{\Omega_\Lambda}$ such that
\begin{equation}
    \sigma\in \mathcal{A}\iff\sigma|_\Lambda\in A.
\end{equation}
Let us then denote $\mathscr{M}_1(\Omega)$ the space of all probability measures on the measurable space $(\Omega,\mathscr{F})$. A real-space RGT comes in the form of a Markov kernel
\begin{equation}
    T:\mathscr{F}\times\Omega\to[0,1],
\end{equation}
which induces a map 
\begin{equation}
    T:\mathscr{M}_1(\Omega)\to\mathscr{M}_1(\Omega),\quad\mu\mapsto T\mu,\qquad T\mu[\mathcal{A}]:=\int T(\mathcal{A},x)\mu(\mathrm{d}x).
\end{equation}
In the case that $T$ is constructed from a coarse-graining procedure on the lattice (a more precise definition follows below), we refer to the induced action on $\mathscr{M}_1(\Omega)$ as a real-space RGT. From a physical point of view, not all measures in $\mathscr{M}_1(\Omega)$ are equally interesting and one mostly works with Gibbs measures. These are a family of probability measures that arise from well-behaved interactions (in Section \ref{Sec2}, we will give a formal definition for absolutely summable and translation invariant interactions which are sufficiently regular for the theory of Gibbs measures). Historically, one would usually start in a finite volume $\Lambda\subset\mathds{Z}^d,|\Lambda|<\infty$ and then parametrize the measure via $\mu\propto e^{-H}$ where $H:\Omega_\Lambda\to\mathds{R}$ is the corresponding Hamilton function describing the energy of a given configuration. Afterwards, one would consider weak convergence of measures when $\Lambda$ approaches $\mathds{Z}^d$, which is known as the thermodynamic limit. In the late 60s Dobrushin \cite{doi:10.1137/1113026} as well as Lanford and Ruelle \cite{Lanford1969} independently discovered the idea behind the DLR equations. These allow for a consistency relation between interactions and probability measures directly in the infinite volume space $\Omega$ rather than working with finite volumes and subsequent weak convergence instead. Let us for now denote with $\mathscr{V}$ the space of interactions equipped with a suitable norm (see \cite{van_Enter_1993} for a discussion on the possible choices of norms and their physical relevance). Wilson's original idea is that the RG induces a flow in the space of couplings (interactions). Mathematically, this requires a map $R_T:\mathscr{V}\to\mathscr{V}$ such that the following diagram commutes:
\[
\begin{tikzcd}
\mathscr{V} \arrow[r, "R_T"] \arrow[d, "DLR"] & \mathscr{V}  \\
\mathscr{M}_1(\Omega) \arrow[r, "T"]                 & \mathscr{M}_1(\Omega) \arrow[u, "DLR"]
\end{tikzcd}
\]
However, in \cite{van_Enter_1993} the authors have rigorously proved for many different physically relevant choices of $T$ that the upward arrow and with it the map $R_T$ in the above diagram fail to exist for certain input interactions, famously also for the classical nearest-neighbour interaction at sufficiently low temperatures. This means that the RGT maps some measures consistent with an interaction to probability measures that are no longer consistent with any interaction. While $T$ is still globally defined on $\mathscr{M}_1(\Omega)$, which allows us to investigate some properties of the RG-flow like the fixed-point structure of $T$, a number of predictions from Wilson's original idea require the RG to act on interactions instead. Although this is definitely a set back for rigorous implementations of Wilson's RG, it is not a complete failure. Most predictions of RG-theory concern themselves with the physics at second order phase transitions. So far, the proven pathologies primarily apply to low temperatures and only in very few cases affect temperatures up to or even beyond the critical temperature. Thus, so far these pathologies do not rule out that in the interesting region, RGTs exhibit the expected behaviour. It remains an open question to determine the maximal support of $R_T$ within $\mathscr{V}$ for any given $T$ and whether such a map is well-defined for models at a second order phase transition. Beyond these issues of well-definedness there are further mathematical issues arising when one attempts to rigorously implement RGTs. In particular, in \cite{Yin} Yin has proven that even in the region where the RG is well-defined, the linearization of the RGT around the trivial $T=\infty$ (non-interacting) fixed point is not as regular as one would have hoped. This has huge ramifications for RG-theory as the linearization of the RGT around the fixed point is supposed to contain data on the macroscopic behaviour of the universality class, in particular the critical exponents. All in all, the mathematical foundation of Wilson's RG for classical lattice models is still lacking in many areas more than 50 years later.\medskip

This paper proposes a new approach to rigorous implementations of the RG that avoids the previously mentioned pathologies completely. The idea is to consider inverse limit points of systems defined on increasing finite volumes. On finite lattices, there is a one to one correspondence between Hamilton functions (modulo a constant) and probability measures that are non-zero everywhere. Therefore, a single application of a given RGT on a finite lattice can always be extended to the space of Hamilton functions. However, on finite lattices the RGT usually maps one space to another. The difficulty then lies in constructing a space that is mapped into itself. It is hoped that within this approach, also the linearization of the RGT around fixed points becomes well-behaved. It is worth to mention that there are other existing ideas to obtain a rigorous implementation of the RG, one of them being the tensor network formulation of RGTs. See for example the paper by Kennedy and Rychkov \cite{Slava}. \medskip

This article is structured as follows. The remainder of the introduction is spent to explain all the necessary concepts regarding the Ising model on finite graphs. Section \ref{Sec2} provides the definition of block-spin RGTs and connects them to the ideas of inverse limits. Afterwards, we will prove the main theorem which states that there is a suitable subset in the space of inverse limits of interactions on which the RGT is well-defined and thus allows for a pathology free implementation of Wilson's RG. Section \ref{Sec3} will picture the current standpoint and then highlight the next logical steps towards realizing the complete Wilsonian picture of RG.

\subsection{Basic Graph-Theory and Ising Spin Models on Finite Graphs}

\begin{definition}
    A graph $G$ is a triplet $(V,E,\phi)$ where $V=\{v_i, i\in I\}$ is a set of vertices and $E=\{e_j, j\in J\}$ is a set of edges, with $I,J$ being at most countable index sets. The incidence function $\phi$ is a map
    \begin{equation}
        \phi:E\to\left\{\{x,y\}\,|\,x,y\in V\right\}.
    \end{equation}
\end{definition}

The following list contains some basic notions from graph theory.
\begin{enumerate}
    \item[i)] We call two vertices $x,y\in V$ adjacent (denoted as $x\sim y$) if $\phi^{-1}(\{x,y\})\neq\emptyset$.
    \item[ii)] We call a graph connected if for any pair of vertices $x,y\in V$ there exists a finite sequence $\{x=v_1,v_2,...,v_N=y\}\subset V$ satisfying $v_i\sim v_{i+1}$. Such a finite sequence is called a path between $x$ and $y$. In case there exists a path between $x,y$ we call them connected.
    \item[iii)] A graph homomorphism $f:G\to H$ is a pair of maps $f = (f_V, f_E)$ where $f_V:V(G)\to V(H)$ and $f_E:E(G)\to E(H)$ such that $\phi_G(e)=\{x,y\}\implies\phi_H(f_E(e))=\{f_V(x),f_V(y)\}$.
    
    If both $f_V$ and $f_E$ are bijective, we call $f$ an isomorphism. The set of all isomorphisms between $G$ and $H$ is denoted as $\Iso(G,H)$, and we write $\Iso(G):=\Iso(G,G)$ for the automorphism group of $G$.
    \item[iv)] A subgraph $\Gamma\subset G$ is a graph $(V(\Gamma), E(\Gamma), \phi_\Gamma)$ where $V(\Gamma)\subset V(G)$, $E(\Gamma)\subset E(G)$, and $\phi_\Gamma = \phi_G\big|_{E(\Gamma)}$.
    \item[v)] For $x\in V(G)$, we denote the set of incident edges at $x$ as $E_G(x):=\{e\in E(G)\,:\,x \in \phi_G(e)\}$.
    \item[vi)] The vertex boundary of a subgraph $\Gamma\subset G$ denoted as $\partial\Gamma$ is given by $\{x\in V(\Gamma)\,:\,\exists y\in V(G)\backslash V(\Gamma),x\sim y\}$.
    \item[vii)] The graph distance between two vertices $x,y\in G$ is given by
    \begin{equation}
        d(x,y)=\begin{cases}\min\{n\,:\,\exists\,\,\mathrm{a\,\,path\,\,of\,\,length}\,\,n\,\,\mathrm{between}\,\,x\,\,\mathrm{and}\,\,y\}&\mathrm{if}\,\,x,y\,\,\mathrm{are\,\,connected}\\
        \infty&\mathrm{otherwise}
        \end{cases}.
    \end{equation}
    For a subset $S\subset V(G)$ we set $d(x,S):=\min_{y\in S}d(x,y)$. The diameter of a finite subgraph $\Gamma\subset G$ is defined as $\diam(\Gamma):=\max_{x,y\in\Gamma}d(x,y)$ where the distance is measured in $G$.
\end{enumerate}

\begin{definition}
    A graph covering is a surjective homomorphism $\pi = (\pi_V, \pi_E): G \to B$ such that for every vertex $x \in V(G)$, the restriction of the edge map
    \begin{equation}
        \pi_E\big|_{E_G(x)}: E_G(x) \to E_B(\pi_V(x))
    \end{equation}
    is a bijection. Therefore, locally $G$ and $B$ look alike. Further, a deck transformation is an $h \in \Iso(G)$ satisfying $\pi \circ h = \pi$. We denote the space of all deck transformations as $\mathcal{D}(\pi) \subset \Iso(G)$.
\end{definition}

\begin{definition}
    Let $B$ be a connected graph and $\pi:G\to B$ a graph covering. A fundamental domain of this covering is a connected subgraph $Q\subseteq G$ such that $\pi_V|_{V(Q)}:V(Q)\to V(B)$ is a bijection.
\end{definition}

\begin{example}\label{Ex1}
    We shall consider graphs $G,B$ where $V(G)=E(G)=\mathds{Z}$ and $\phi_G(n)=\{n,n+1\}$. The base graph $B$ is given by $V(B)=E(B)=\mathds{Z}/4\mathds{Z}$ and\,\footnote{$[n]$ denotes the equivalence class $\{ m\in\mathds{Z}\,:\,m-n\mod4=0\}$.}
    \begin{equation}
        \phi_B([n])=\{[n],[n+1]\}.
    \end{equation}
    We then find the following covering map
    \begin{equation}
        \pi_V:V(G)\to V(B),\quad n\mapsto [n]\quad;\quad\pi_E:E(G)\to E(B),\quad n\mapsto [n]
    \end{equation}
    which can be visualized in the following way:
    \begin{center}
        \begin{tikzpicture}
            \draw[thick][black][-] (-5.5,0) to (-5,0);
            \draw[thick][black][-] (-5,0) to (-4,0);
            \draw[thick][black][->] (-5,0) to (-4.4,0);
            \draw[thick][black][-] (-4,0) to (-3,0);
            \draw[thick][black][->>] (-4,0) to (-3.4,0);
            \draw[thick][black][-] (-3,0) to (-2,0);
            \draw[thick][black][->>>] (-3,0) to (-2.3333,0);
            \draw[thick][black][-] (-2,0) to (-1,0);
            \draw[thick][black][->>>>] (-2,0) to (-1.25,0);
            \draw[thick][black][-] (-1,0) to (0,0);
            \draw[thick][black][->] (-1,0) to (-0.4,0);
            \draw[thick][black][-] (0,0) to (1,0);
            \draw[thick][black][->>] (0,0) to (0.6,0);
            \draw[thick][black][-] (1,0) to (2,0);
            \draw[thick][black][->>>] (1,0) to (1.6666,0);
            \draw[thick][black][-] (2,0) to (3,0);
            \draw[thick][black][->>>>] (2,0) to (2.75,0);
            \draw[thick][black][-] (3,0) to (3.5,0);
            
            \draw[thick][black] (8,0) arc[start angle=0, end angle=90, radius=1.0];
            \draw[thick][black][->] (8,0) arc[start angle=0, end angle=45, radius=1.0];
            \draw[thick][black] (7,1) arc[start angle=90, end angle=180, radius=1.0];
            \draw[thick][black][->>] (7,1) arc[start angle=90, end angle=145, radius=1.0];
            \draw[thick][black] (6,0) arc[start angle=180, end angle=270, radius=1.0];
            \draw[thick][black][->] (6,0) arc[start angle=180, end angle=235, radius=1.0];
            \draw[thick][black][->] (6,0) arc[start angle=180, end angle=228, radius=1.0];
            \draw[thick][black][->] (6,0) arc[start angle=180, end angle=221, radius=1.0];
            \draw[thick][black] (7,-1) arc[start angle=270, end angle=360, radius=1.0];
            \draw[thick][black][->] (7,-1) arc[start angle=270, end angle=330, radius=1.0];
            \draw[thick][black][->] (7,-1) arc[start angle=270, end angle=323, radius=1.0];
            \draw[thick][black][->] (7,-1) arc[start angle=270, end angle=316, radius=1.0];
            \draw[thick][black][->] (7,-1) arc[start angle=270, end angle=309, radius=1.0];

            \fill[black] (-5,0) circle (0.1);
            \fill[black] (-4,0) circle (0.1);
            \fill[black] (-3,0) circle (0.1);
            \fill[black] (-2,0) circle (0.1);
            \fill[black] (-1,0) circle (0.1);
            \fill[black] (0,0) circle (0.1);
            \fill[black] (1,0) circle (0.1);
            \fill[black] (2,0) circle (0.1);
            \fill[black] (3,0) circle (0.1);

            \fill[black] (6,0) circle (0.1);
            \fill[black] (7,-1) circle (0.1);
            \fill[black] (8,0) circle (0.1);
            \fill[black] (7,1) circle (0.1);

            \node[] at (-5.8,-0.02) {$\cdots$};
            \node[] at (3.8,-0.02) {$\cdots$};
            \node[] at (4.95,0.25) {$\pi$};

            \node[] at (-4.15,0) {$\Big[$};
            \node[] at (-0.85,0) {$\Big]$};

            \draw[-Stealth] (4.5,0) -- (5.5,0);
            
        \end{tikzpicture}
    \end{center}
    The brackets highlight one possible fundamental domain $Q$. The set of deck transformations for this covering is given by
    \begin{equation}
        \mathcal{D}(\pi)=\{h\in\Iso(G)\,:\,h_V(n)=n+k,h_E(n)=n+k,\quad k\in 4\mathds{Z}\}.
    \end{equation}
\end{example}

\begin{proposition}
    Let $G,B$ be finite graphs, $B$ connected and $\pi:G\to B$ a graph covering. Then there exists some constant $b\geq1$ satisfying
    \begin{equation}
        |\pi_V^{-1}(x)|=b\quad\forall x\in V(B).
    \end{equation}
    We refer to $b$ as the degree of $\pi$.
\end{proposition}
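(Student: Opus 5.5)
The plan is to show that the fibre cardinality $x\mapsto|\pi_V^{-1}(x)|$ is the same for any two adjacent vertices of $B$. Connectedness of $B$ then propagates this value along paths to all of $V(B)$, and surjectivity of $\pi$ makes the common value $b\geq1$. Finiteness of $G$ guarantees that all fibres are finite, so comparing cardinalities makes sense.

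The key step is a double count. Fix an edge $\epsilon\in E(B)$ with $\phi_B(\epsilon)=\{x,y\}$, and first assume $x\neq y$. Consider the set $\pi_E^{-1}(\epsilon)\subset E(G)$. Since $\pi$ is a homomorphism, every $e\in\pi_E^{-1}(\epsilon)$ satisfies $\phi_G(e)=\{u,v\}$ with $\pi_V(u)=x$ and $\pi_V(v)=y$. In particular, each such $e$ has exactly one endpoint in $\pi_V^{-1}(x)$. Conversely, for each $u\in\pi_V^{-1}(x)$, the local bijection $\pi_E|_{E_G(u)}:E_G(u)\to E_B(x)$ gives exactly one edge $e\in E_G(u)$ with $\pi_E(e)=\epsilon$, because $\epsilon\in E_B(x)$. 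This yields a bijection between $\pi_V^{-1}(x)$ and $\pi_E^{-1}(\epsilon)$, sending $u$ to its unique lift of $\epsilon$ at $u$. Running the same argument at $y$ gives a bijection between $\pi_V^{-1}(y)$ and $\pi_E^{-1}(\epsilon)$. Hence
\begin{equation}
|\pi_V^{-1}(x)|=|\pi_E^{-1}(\epsilon)|=|\pi_V^{-1}(y)|.
\end{equation}
Loops ($x=y$) never need to be used, since adjacency of distinct vertices is all the propagation step requires. It is enough to choose, for each step $v_i\sim v_{i+1}$ of a path with $v_i\neq v_{i+1}$, an edge $\epsilon\in\phi_B^{-1}(\{v_i,v_{i+1}\})$; such an edge exists by the definition of adjacency.

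To finish, fix $x_0\in V(B)$ and set $b:=|\pi_V^{-1}(x_0)|$. For any $x\in V(B)$, connectedness of $B$ gives a path $x_0=v_1\sim\dots\sim v_N=x$, and we may discard repeated consecutive vertices. Applying the equality above along the path gives $|\pi_V^{-1}(x)|=b$. Since $\pi$ is surjective, $\pi_V$ is in particular surjective (as part of a surjective homomorphism), so $b\geq1$. If $V(B)$ is a single vertex, the claim holds trivially with $b=|V(G)|\geq1$.

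The main obstacle is the bookkeeping in the double count. One has to check carefully that each lifted edge has exactly one endpoint in the fibre over $x$, so that the map from $\pi_V^{-1}(x)$ to $\pi_E^{-1}(\epsilon)$ is injective as well as surjective; this is exactly where $x\neq y$ is needed. One also has to note that ``surjective homomorphism'' should be read as surjectivity of $\pi_V$, or derived from it, in order to get $b\geq1$.
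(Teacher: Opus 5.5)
Your proof is correct and follows the paper's approach. Both arguments double count the lifts of edges joining adjacent base vertices, using the local bijectivity $E_G(u)\to E_B(\pi_V(u))$, and then propagate the equality of fibre sizes along paths in the connected base $B$. The only differences are that you count the lifts of a single edge rather than of all of $\phi_B^{-1}(\{x,y\})$, and that you make explicit several points the paper leaves implicit: $x\neq y$, each lift having exactly one endpoint over $x$, and surjectivity giving $b\geq 1$.
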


\begin{proof}
    In case that $|V(B)|=1$ the statement is trivial. For $|V(B)|>1$ we consider $x,y\in V(B)$ such that $x\sim y$. Let us denote the set $\mathcal{E}:=\{e\in E_G\,:\,\pi_E(e)\in\phi_B^{-1}(\{x,y\})\}$. We then find
    \begin{equation}
        |\mathcal{E}|=|\pi_V^{-1}(x)|\cdot|\phi_B^{-1}(\{x,y\})|=|\pi_V^{-1}(y)|\cdot|\phi_B^{-1}(\{x,y\})|.
    \end{equation}
    Since $x\sim y\implies|\phi_B^{-1}(\{x,y\})|>0$ we conclude $|\pi_V^{-1}(x)|=|\pi_V^{-1}(y)|$ for adjacent vertices. In case that $B$ is connected, we may use that for arbitrary vertices $x,y\in V(B)$ there is a path of adjacent vertices $\{x=x_0,...,x_N=y\}$ such that
    \begin{equation}
        |\pi_V^{-1}(x)|=|\pi_V^{-1}(x_1)|=\cdots=|\pi_V^{-1}(x_{N-1})|=|\pi_V^{-1}(y)|.
    \end{equation}
\end{proof}

Let $G,B$ be finite graphs, $B$ connected and $\pi:G\to B$ a graph covering of degree $b$. Then there exist fundamental domains $Q_1,...,Q_b\subset G$ such that
\begin{enumerate}
    \item[i)] $V(G)=\bigcup_{i=1}^bV(Q_i)$.
    \item[ii)] $V(Q_i)\cap V(Q_j)=\emptyset,i\ne j$. 
\end{enumerate}
In that case, we call the $Q_1,...,Q_b$ a partition of $G$.

\subsubsection{Ising Spin Models on Finite Graphs}

Let $G$ be some finite graph. An Ising spin is an element of the space $\Omega_0:=\{-1,1\}$. An Ising spin model on $G$ is a probability space $(\Omega_G,2^{\Omega_G},\mu)$ where the configuration space is given by
\begin{equation}
    \Omega_G=\{-1,1\}^{V(G)}.
\end{equation}
In order to restrict ourselves to physically reasonable probability measures, we shall first introduce the concept of symmetries.

\begin{definition}
    A symmetry is a symmetry group $S$ together with an action
    \begin{equation}
        \rho:S\to\Aut(\Omega_G).
    \end{equation}
\end{definition}
A symmetry divides the configuration space into equivalence classes
\begin{equation}
    [\sigma]=\{\tau\in\Omega_G\,|\,\exists s\in S\,\,\mathrm{s.t.}\,\,s\cdot\tau=\sigma\}.
\end{equation}
The space of all equivalence classes is usually denoted as $\Omega_G /S$. There are two distinct types of symmetries. There are internal symmetries which act on the single-spin space $\Omega_0$. Their action on the configurations is then given by
\begin{equation}
    (s\cdot\sigma)_x=s\cdot\sigma_x\,.
\end{equation}
For Ising spins, the only possible internal symmetry is a $\mathds{Z}_2$-action given by\footnote{Here we shall think of $\mathds{Z}_2$ as the set $\{-1,1\}$ with multiplication as the group operation.}
\begin{equation}
    \pm1\cdot\sigma_x=\pm\sigma_x\,.
\end{equation}
The second class of symmetries are spatial symmetries. Their symmetry groups are subgroups $S\subset\Iso(G)$ and act on the configuration via
\begin{equation}
    (s\cdot\sigma)_x=\sigma_{s^{-1}\cdot x}\,.
\end{equation}
In case we consider a covering $\pi:G_\infty\to G$ we may choose the symmetry group $\Iso(G_\infty)\subset\Iso(G)$. The action onto $G$ is then defined via the consistency relation
\begin{equation}
    s\circ\pi=\pi\circ s.
\end{equation}
One might notice that $\Iso(G_\infty)$ is not actually a subgroup of $\Iso(G)$ as it contains too many elements. i.e.\ there are non-trivial elements $s\in\Iso(G_\infty)$ which, however, act trivially on $G$. Therefore, we must consider the coset
\begin{equation}
    \Iso(G_\infty)/\mathcal{D}(\pi)\subset\Iso(G).
\end{equation}
Once we have a chosen a symmetry group $S$, we want to specify a model by selecting a probability measure that is invariant under the symmetry action. Since the configuration space $\Omega_G$ is finite, things become a lot simpler. A probability measure $\mu$ on $(\Omega_G,2^{\Omega_G})$ is uniquely defined by the probabilities of the singletons $\{\sigma\},\sigma\in\Omega_G$. Therefore, we may identify a probability measure on $(\Omega_G,2^{\Omega_G})$ with a function\footnote{In a more formal language, we would refer to $\mu(\sigma)$ as the probability mass function regarding the counting measure.}
\begin{equation}
    \mu:\Omega_\Lambda\to[0,1]
\end{equation}
satisfying
\begin{equation}
    \sum_{\sigma\in\Omega_\Lambda}\mu(\sigma)=1.
\end{equation}
An $S$-invariant probability measure is then identified by such a function satisfying
\begin{equation}
    \mu\circ s^{-1}=\mu\quad\forall s\in S.
\end{equation}
We will denote the space of all $S$-invariant probability measures as $\mathscr{M}_1(\Omega_G/S)$. In statistical mechanics, most of the time one does not declare the probability measure directly but instead the underlying Hamilton function $H:\Omega_G\to\mathds{R}$. The Hamilton function then determines a probability measure via the corresponding Boltzmann weights (the inverse temperature $\beta$ is absorbed into the definition of $H$)
\begin{equation}
    \mu_H(\sigma):=\frac{1}{Z}e^{-H(\sigma)},\quad Z=\sum_{\tau\in\Omega_G}e^{-H(\tau)}.
\end{equation}
Therefore, we are also interested in the space of all $S$-invariant real-valued functions on $\Omega_G$ denoted as $\mathcal{F}(\Omega_G/S)$. The Boltzmann weights relate points in $\mathcal{F}(\Omega_G/S)$ with points in $\mathscr{M}_1(\Omega_G/S)$, although there are two important caveats. From the definition above it is clear that adding a constant term to the Hamilton function leaves the corresponding measure invariant, i.e.\ $\mu_H=\mu_{H+c}$ for some constant $c$. On the other hand, we note that $e^{-H(\sigma)}$ cannot become zero such that we cannot parametrize measures that satisfy $\mu(\sigma)=0$ for some configurations. These measures make up the topological boundary of $\mathscr{M}_1(\Omega_G/S)$. Thus, the correct statement reads
\begin{equation}
    \mathcal{F}(\Omega_G/S)/\mathds{R}\cong\mathscr{M}_1(\Omega_G/S)\backslash\partial\mathscr{M}_1(\Omega_G/S),
\end{equation}
where $\mathcal{F}(\Omega_G/S)/\mathds{R}$ denotes the space of functions modulo a constant and $\cong$ refers to diffeomorphic in this context.

\section{RGTs on the Space of Inverse Limits}\label{Sec2}

Before we concern ourselves with the challenges of implementing RGTs on finite graphs, let us consider the more traditional picture. Therefore, we quickly want to review the example from the introduction where we considered probability measures on the space $(\Omega,\mathscr{F})$ where $\Omega=\{-1,1\}^{\mathds{Z}^d}$ and $\mathscr{F}$ is the $\sigma$-algebra generated by finite events. In order to write down a block-spin transformation for such a system, we partition $\mathds{Z}^d$ into $d$-dimensional unit cubes. To each such cube we then assign a new vertex $x'$ in the centre such that these new vertices form a 'coarser' version of the initial lattice $\mathds{Z}^d$. 

\begin{figure}[h!]
    \centering
    \centering
    \begin{tikzpicture}[scale= 0.6]
        \def \rows {4}  
        \def \cols {4}  
        \def \spacing {2}  
        \def \extend {0.0}  

        \filldraw[LimeGreen!20] ({8.5}, {8.5}) -- ({8.5}, {5.5}) -- ({5.5}, {5.5}) -- ({5.5}, {8.5}) -- cycle;
    
        \foreach \x in {1,...,\cols} {
            \foreach \y in {1,...,\rows} {
                \filldraw (\x*\spacing, \y*\spacing) circle (3 pt);
    
                \ifnum \x<\cols
                    \draw[thick] (\x*\spacing, \y*\spacing) -- ({(\x+1)*\spacing}, \y*\spacing);
                \fi
    
                \ifnum \y<\rows
                    \draw[thick] (\x*\spacing, \y*\spacing) -- (\x*\spacing, {(\y+1)*\spacing});
                \fi
    
                \ifnum \x=\cols
                    \draw[dashed] (\x*\spacing, \y*\spacing) -- ({(\x+\extend)*\spacing}, \y*\spacing);
                \fi
                
                \ifnum \y=\rows
                    \draw[dashed] (\x*\spacing, \y*\spacing) -- (\x*\spacing, {(\y+\extend)*\spacing});
                \fi
    
                \ifnum \x=1
                    \draw[dashed] (\x*\spacing, \y*\spacing) -- ({(\x-\extend)*\spacing}, \y*\spacing);
                \fi
    
                \ifnum \y=1
                    \draw[dashed] (\x*\spacing, \y*\spacing) -- (\x*\spacing, {(\y-\extend)*\spacing});
                \fi
            }
        }

        \filldraw[Blue] ({3}, {3}) circle (3 pt);
        \filldraw[Blue] ({3}, {7}) circle (3 pt);
        \filldraw[Blue] ({7}, {3}) circle (3 pt);
        \filldraw[Blue] ({7}, {7}) circle (3 pt);

        \draw[dashed, Blue] ({3}, {3}) -- ({3}, {7});
        \draw[dashed, Blue] ({3}, {7}) -- ({7}, {7});
        \draw[dashed, Blue] ({7}, {7}) -- ({7}, {3});
        \draw[dashed, Blue] ({7}, {3}) -- ({3}, {3});

        \draw[dashed, Blue] ({3}, {3}) -- ({3}, {1});
        \draw[dashed, Blue] ({3}, {3}) -- ({1}, {3});
        \draw[dashed, Blue] ({7}, {3}) -- ({9}, {3});
        \draw[dashed, Blue] ({7}, {3}) -- ({7}, {1});
        \draw[dashed, Blue] ({3}, {7}) -- ({3}, {9});
        \draw[dashed, Blue] ({3}, {7}) -- ({1}, {7});
        \draw[dashed, Blue] ({7}, {7}) -- ({9}, {7});
        \draw[dashed, Blue] ({7}, {7}) -- ({7}, {9});

        \draw[black][thick] ({2}, {2}) -- ({1}, {2});
        \draw[black][thick] ({2}, {2}) -- ({2}, {1});
        \draw[black][thick] ({2}, {4}) -- ({1}, {4});
        \draw[black][thick] ({2}, {6}) -- ({1}, {6});
        \draw[black][thick] ({2}, {8}) -- ({1}, {8});
        \draw[black][thick] ({2}, {8}) -- ({2}, {9});
        \draw[black][thick] ({4}, {2}) -- ({4}, {1});
        \draw[black][thick] ({6}, {2}) -- ({6}, {1});
        \draw[black][thick] ({8}, {2}) -- ({8}, {1});
        \draw[black][thick] ({8}, {2}) -- ({9}, {2});
        \draw[black][thick] ({8}, {4}) -- ({9}, {4});
        \draw[black][thick] ({8}, {6}) -- ({9}, {6});
        \draw[black][thick] ({8}, {8}) -- ({9}, {8});
        \draw[black][thick] ({8}, {8}) -- ({8}, {9});
        \draw[black][thick] ({6}, {8}) -- ({6}, {9});
        \draw[black][thick] ({4}, {8}) -- ({4}, {9});

        \node[Blue] at (7.4,7.4) {$x'$};
        \node[LimeGreen] at (8.8,5.2) {$B_{x'}$};
    \end{tikzpicture}
    
    \caption{The Coarse-Graining Geometry Visualized in $d=2$.}
    \label{fig: 16 to 4 Blocking on the Square Lattice with PBCs}
\end{figure}

Now, we want to find local transition rules that describe how we map the spin configuration $s_{B_{x'}}$ to a spin $\sigma_{x'}$ where $B_{x'}$ is the cube corresponding to the vertex $x'$. In order to guarantee some physical relevance of the outcome, we want to put some restrictions towards the choice of these rules.
\begin{enumerate}
    \item[R1:] The set of rules should be invariant under symmetry transformations of the block configurations. These include rotations, reflections and the $\mathds{Z}_2$-action $(s_{B_{x'}},\sigma_{x'})\mapsto(-s_{B_{x'}},-\sigma_{x'})$.
    \item[R2:] The rules must have a probabilistic interpretation, i.e.\ for any input configuration $s$ we must obtain
    \begin{equation}
        t(\sigma_{x'}=+1|s_{B_{x'}}),t(\sigma_{x'}=-1|s_{B_{x'}})\in[0,1]\quad;\quad t(\sigma_{x'}=+1|s_{B_{x'}})+t(\sigma_{x'}=-1|s_{B_{x'}})=1.
    \end{equation}
    \item[R3:] $t$ must preserve the monotonicity of the Ising model, i.e.\footnote{Here $s_{B_{x'}}\leq\tilde s_{B_{x'}}$ should be understood pointwise, i.e.\ $s_{B_{x'}}\leq\tilde s_{B_{x'}}\iff s_{y'}\leq\tilde s_{y'}$ for all $y'\in B_{x'}$}
    \begin{equation}
        s_{B_{x'}}\leq\tilde s_{B_{x'}}\implies t(\sigma_{x'}=+1|s_{B_{x'}})\leq t(\sigma_{x'}=+1|\tilde s_{B_{x'}}).
    \end{equation}
\end{enumerate}
Once we have specified these transition probabilities $t(\sigma_{x'}|s_{B_{x'}})$ we define a product probability kernel
\begin{equation}
    T=\prod_{x'\in\mathds{Z}^2}t(\sigma_{x'}| s_{B_{x'}}).
\end{equation}
The corresponding RGT is defined as the map (note that we have used the same $t$ for all $x'\in\mathds{Z}^2$ such that this map preserves translation invariance)
\begin{equation}
    \mu\mapsto T\mu,\qquad T\mu(\mathcal{A}):=\int_\Omega T(\mathcal{A}|s)\mu(\mathrm{d}s).
\end{equation}
One common example for $t$ is a majority rule vote given by
\begin{equation}
    t(\sigma_{x'}=+1|s_{B_{x'}})=\begin{cases}
        1&\sum_{x\in B_{x'}}s_x>0\\
        1/2&\sum_{x\in B_{x'}}s_x=0\\
        0&\sum_{x\in B_{x'}}s_x<0
    \end{cases}.
\end{equation}
Traditionally, one would then start to investigate properties of this map which led, among other things, to the discovery of the pathologies mentioned in the introduction. However, we want to pursue a different route and start considering, how we can make this concept work for finite graphs. Remember, one of the key advantages of finite graphs is that measures and interactions are related by the diffeomorphism $\mu\propto e^{-H}$ such that the van Enter et al.\ pathologies cannot appear on finite graphs. Although, we do avoid these pathologies by working with finite graphs, there seems to be a much bigger problem instead. Namely, on finite graphs implementing a coarse-graining necessarily reduces the number of vertices. Thus, any RGT, set up as explained above, can only be implemented on finite graphs as a map
\begin{equation}
    T:\mathscr{M}_1(\Omega_G/S)\to\mathscr{M}_1(\Omega_{G'}/S').
\end{equation}
where $|G|>|G'|$. This map consequently cannot exhibit fixed points nor can it be iterated. Therefore, we require a new idea in order to obtain a framework akin to Wilson's RG.\medskip

This is where the idea of inverse limits of models on finite graphs enters the story. Let $\Lambda_\infty$ be some $d$-dimensional lattice (like $\mathds{Z}^d$) and $T$ some block-spin RGT on $\Lambda_\infty$ with blocks $B_{x'}$ of size $b>1$. We characterize the block-structure via a map
\begin{equation}
    \phi:\Lambda_\infty\to\Lambda_\infty,\quad x\to x'\iff x\in B_{x'}\,.
\end{equation}
In order to set up an inverse limit, we choose a sequence $(\Lambda_n)_{n\in\mathds{N}}$ of finite, connected graphs satisfying:
\begin{enumerate}
    \item[i)] There exist coverings $\pi_{\infty n}:\Lambda_\infty\to\Lambda_n$ compatible with the block structure of $T$, i.e.\ there exist maps $\phi_n:\Lambda_n\to\Lambda_{n-1}$ satisfying
    \begin{equation}
        \pi_{\infty m}\circ\phi=\phi_n\circ\pi_{\infty n}\,.
    \end{equation}
    \item[ii)] There exist coverings $\pi_{nm}:\Lambda_n\to\Lambda_m$ of degree $b^{n-m}$ satisfying $\pi_{km}\pi_{nk}=\pi_{nm}$ as well as $\pi_{nm}\pi_{\infty n}=\pi_{\infty m}$.
\end{enumerate}

Once we have constructed such a system, we can define the finite-volume kernels $T_n,n\in\mathds{N}$ via
\begin{equation}
    T_n=\prod_{x\in\Lambda_{n-1}}t(\sigma_{x}|s_{\phi_n^{-1}(x)}).
\end{equation}
Let us denote the spaces of functions as\footnote{Recall that, in order to be precise, we have to consider the coset $\Iso(\Lambda_\infty)/\mathcal{D}(\pi_{\infty n})$. Also, in some scenarios the RGT can break some of the symmetries. One example would be a block-spin transformation on the $2\mathrm{D}$ triangular lattice using single triangles as blocks which breaks the $60^\circ$ rotational symmetry of the lattice down to $120^\circ$ symmetry. In that case, we would choose the symmetry group to be $S_n=S_\infty/\mathcal{D}(\pi_{\infty n})\times\mathds{Z}^2$ where $S_\infty\subseteq\Iso(\Lambda_\infty)$ is the largest possible symmetry group preserved by the RGT. However, in order for these definitions to be sensible it is important that the RGT preserves translation invariance in any case.} $\mathcal{V}_n:=\mathcal{F}(\Omega_{\Lambda_n}/\Iso(\Lambda_\infty)\times\mathds{Z}_2)$. The $T_n$ then allow us to construct a map $R_n$ acting on $\mathcal{V}_n$ via
\begin{equation}
    R_n:\mathcal{V}_n\to \mathcal{V}_{n-1},\,\, H_n\mapsto H_n',\qquad e^{-R_nH_n[\sigma]}:=\sum_{s\in\Omega_n}T_n(\sigma|s)e^{-H_n[s]}.
\end{equation}
While this provides a proper definition of $R_n$ acting onto $\mathcal{V}_n$, it is not yet sufficient for an implementation of Wilson's RG. The two obvious problems are 1) $R_n:\mathcal{V}_n\to \mathcal{V}_{n-1}$ cannot exhibit proper fixed points, as the domain and image live in separate spaces, and 2) By iterating the RGT we will end up on smaller and smaller lattices, eventually running out of vertices. The key idea is that we can avoid these issues by considering inverse limits. In order to obtain a useful definition of an inverse limit, we need linear projection maps $\omega_{nm}:\mathcal{V}_n\to \mathcal{V}_m$ that commute, i.e.\ satisfy $\omega_{km}\omega_{nk}=\omega_{nm}$. In that case, we define the space of inverse limits corresponding to the projections $\omega_{nm}$ as the space of sequences $v_n\in \mathcal{V}_n,n\in\mathds{N}$ satisfying 
\begin{equation}
    \omega_{nm}v_n=v_m.
\end{equation}
At first sight, it may seem that the natural choice for $\omega_{nm}$ is the pushforward map
\begin{equation}
    (\pi_{nm})_*:\mathcal{V}_n\to \mathcal{V}_m,\quad (\pi_{nm})_*v_n[\sigma]:=v_n[\pi_{nm}^*\sigma].
\end{equation}
The pullback on configurations is defined as $(\pi_{nm}^*\sigma)_x:=\sigma_{\pi_{nm}(x)}$. However, this is not quite the correct physically reasonable choice. Let us, for example, consider the nearest-neighbour Ising model. The total energy of the model scales with the volume $|\Lambda_n|$ while the local energy density (energy divided by volume) is the quantity that remains the same over all lattice sizes. However, the map $(\pi_{nm})_*$ implements a relation that matches the total energy among the different lattices rather than the energy density. We can account for this by adjusting the definition to
\begin{equation}
    \omega_{nm}:=\frac{1}{b^{n-m}}(\pi_{nm})_*.
\end{equation}
This then allows us to define the space of inverse limits as
\begin{equation}
    \mathcal{V}_\infty:=\{(v_n)_{n\in\mathds{N}}\,:\,\omega_{nm}v_n=v_m\,\,\forall n,m\in\mathds{N}\}.
\end{equation}
Since the projections $\omega_{nm}$ are linear, $\mathcal{V}_\infty$ is again a vector space. The key question is how we can extend the action of $T$ onto $\mathcal{V}_\infty$. If the finite-volume RGTs $R_n$ and the projections $\omega_{nm}$ would commute, i.e.\ we would find $\omega_{nn-1}R_{n+1}=R_n\omega_{n+1n}$, then we could simply set $(Rv)_n:=R_{n+1}v_{n+1}$ obtaining a new compatible sequence. However, from numerical simulations (as well as analytical arguments) we know that this is not true, i.e.\ the RGT is affected by boundary effects that alter the result depending on whether we project first and then block or vice versa. Fortunately, there is an alternative way to obtain a compatible sequence that does not require this strict commutation property. Let $v\in \mathcal{V}_\infty$ such that the limit
\begin{equation}
    \lim_{n\to\infty}\omega_{n-1m}R_nv_n=:w_m
\end{equation}
exists for all $m\in\mathds{N}$. Then the sequence $w_m$ is again compatible as
\begin{equation}
    \omega_{km}w_k=\omega_{km}\lim_{n\to\infty}\omega_{n-1k}R_nv_n=\lim_{n\to\infty}\omega_{n-1m}R_nv_n=w_m.
\end{equation}
In that case the action of $T$ onto $v$ is well-defined via $(Rv)_n:=w_n$ for all $n\in\mathds{N}$. Note that $\omega_{n-1m}R_nv_n\in\mathcal{V}_m$ for all $n>m$. Thus, we test convergence in finite dimensional vector spaces. The question therefore is, for which points $v\in \mathcal{V}_\infty$ the sequences $\lim_{n\to\infty}\omega_{n-1m}R_nv_n$ all converge such that they belong to the domain
\begin{equation}
    \mathscr{D}(R):=\left\{v\in \mathcal{V}_\infty\,:\,\lim_{n\to\infty}\omega_{n-1m}R_nv_n\,\,\mathrm{exists}\,\,\forall m\in\mathds{N}\right\}\subset \mathcal{V}_\infty.
\end{equation}
Alternatively, we would like to find some set of "physically interesting" points $\mathcal{V}_\mathrm{phys}\subseteq\mathscr{D}(R)$ that satisfies $R(\mathcal{V}_\mathrm{phys})\subseteq \mathcal{V}_\mathrm{phys}$. We will tackle this task in the next section. Before we get there, it is about time to consider an example.

\begin{example}
    Let us consider $\Lambda_\infty=\mathds{Z}^2$. The block-structure is given by
    \begin{equation}
        \phi:\mathds{Z}^2\to\mathds{Z}^2,\qquad \phi(i,j):=(\lfloor i/2\rfloor,\lfloor j/2\rfloor).
    \end{equation}
    This leads to $2\times2$-blocks matching the visualization in Figure \ref{fig: 16 to 4 Blocking on the Square Lattice with PBCs} on the previous page. The finite graphs are periodic square lattices $\Lambda_n$ with vertex sets $\mathds{Z}_{2^n}\times\mathds{Z}_{2^n}$. The coverings are given by\footnote{There will be some abuse of notation as we will use $\pi$ both for the entire graph covering as well as only the vertex map of the covering like in this case.}
    \begin{equation}
        \pi_{\infty n}(i,j):=(i\,\,\mathrm{mod}\,\,2^n,j\,\,\mathrm{mod}\,\, 2^n)\qquad;\qquad \pi_{nm}(i,j):=(i\,\,\mathrm{mod}\,\, 2^m,j\,\,\mathrm{mod}\,\,2^m)
    \end{equation}
    with $\deg(\pi_{nm})=b^{n-m},b=4$. We can visualize the lattices as finer and finer discretization of a torus $T$. The finite-size block-structures are simply given by
    \begin{equation}
        \phi_n(i,j):=(\lfloor i/2\rfloor,\lfloor j/2\rfloor).
    \end{equation}
    On this geometry, we could then, for example, set up the previously mentioned majority rule transformation.
\end{example}

\subsection{Main Results}

\begin{lemma}\label{Lem1}
    Let $(a_n)_{n\in\mathds{N}}\subset\mathds{R}^d$ be a sequence such that\footnote{By equivalence of norms on $\mathds{R}^d$ we do not have to specify the choice of norm.}
    \begin{equation}
        \sum_{n=1}^\infty\|a_n-a_{n-1}\|<\infty.
    \end{equation}
    Then the sequence converges to some limit $a=\lim_{n\to\infty}a_n$.
\end{lemma}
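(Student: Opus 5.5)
The plan is to show that $(a_n)$ is a Cauchy sequence in $\mathds{R}^d$ and then invoke completeness of $\mathds{R}^d$. Since all norms on $\mathds{R}^d$ are equivalent, the summability hypothesis does not depend on the norm, so I fix any norm $\|\cdot\|$ and work with it throughout. (Implicitly the sequence also has an $a_0$ so that the term $n=1$ makes sense; this plays no role.)

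The key step is the telescoping estimate. For $m<n$ I write $a_n-a_m=\sum_{k=m+1}^{n}(a_k-a_{k-1})$ and apply the triangle inequality:
\begin{equation}
    \|a_n-a_m\|\leq\sum_{k=m+1}^{n}\|a_k-a_{k-1}\|\leq\sum_{k=m+1}^{\infty}\|a_k-a_{k-1}\|.
\end{equation}
The right-hand side is the tail of a convergent series of nonnegative reals. It therefore tends to $0$ as $m\to\infty$. Given $\varepsilon>0$, I choose $N$ such that this tail is below $\varepsilon$ for all $m\geq N$. Then $\|a_n-a_m\|<\varepsilon$ for all $n>m\geq N$, so the sequence is Cauchy.

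Completeness of $\mathds{R}^d$ (for instance, componentwise via completeness of $\mathds{R}$, using equivalence with the maximum norm) then gives a limit $a$. Equivalently, the series $\sum_k(a_k-a_{k-1})$ converges absolutely and hence converges, and $a_n=a_0+\sum_{k\le n}(a_k-a_{k-1})$.

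There is no real obstacle here. The only point needing care is that the tail of a convergent nonnegative series vanishes, which is immediate from the definition of convergence of the partial sums.
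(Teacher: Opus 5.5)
Your proof is correct and follows essentially the same route as the paper. You telescope and use the triangle inequality to bound $\|a_n-a_m\|$ by the tail $\sum_{k>m}\|a_k-a_{k-1}\|$, which the paper writes as $M-S_m$; this shows the sequence is Cauchy, and completeness of $\mathds{R}^d$ finishes the argument.
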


\begin{proof}
    Let us consider the sequence $S_n:=\sum_{k=1}^n\|a_k-a_{k-1}\|$. Since $(S_n)_{n\in\mathds{N}}\subset\mathds{R}$ is increasing and bounded it converges to some limit $\lim_{n\to\infty}S_n=M$. Therefore, for any $\varepsilon>0$ exists some $N\in\mathds{N}$ such that $|M-S_n|<\varepsilon$ for all $n\geq N$. Consequently, by applying the triangle inequality, we conclude that
    \begin{equation}
        \|a_n-a_m\|\leq\sum_{k=m+1}^n\|a_k-a_{k-1}\|\leq M-S_m<\varepsilon.
    \end{equation}
    for all $n\geq m\ge N$. Therefore, the sequence $(a_n)_{n\in\mathds{N}}$ is Cauchy and hence, by completeness of $\mathds{R}^d$, it converges.
\end{proof}

\begin{lemma}\label{Lem3}
    We find
    \begin{equation}
        \|R_nG_n-R_nH_n\|_\infty\leq\|G_n-H_n\|_\infty.
    \end{equation}
\end{lemma}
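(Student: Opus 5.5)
Fix $\sigma\in\Omega_{\Lambda_{n-1}}$ and bound $|R_nG_n[\sigma]-R_nH_n[\sigma]|$ by $\|G_n-H_n\|_\infty$ pointwise; taking the supremum over $\sigma$ then gives the claim. By the definition of $R_n$,
\begin{equation}
    R_nG_n[\sigma]-R_nH_n[\sigma]=-\log\frac{\sum_{s\in\Omega_n}T_n(\sigma|s)e^{-G_n[s]}}{\sum_{s\in\Omega_n}T_n(\sigma|s)e^{-H_n[s]}}.
\end{equation}
Set $\delta:=\|G_n-H_n\|_\infty$. For every $s$ we have $-H_n[s]-\delta\le -G_n[s]\le -H_n[s]+\delta$, so
\begin{equation}
    e^{-\delta}e^{-H_n[s]}\le e^{-G_n[s]}\le e^{\delta}e^{-H_n[s]}.
\end{equation}
Since the weights $T_n(\sigma|s)\ge0$ (rule R2), multiplying by $T_n(\sigma|s)$ and summing over $s$ preserves these inequalities. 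Dividing by $\sum_sT_n(\sigma|s)e^{-H_n[s]}$ shows the ratio above lies in $[e^{-\delta},e^{\delta}]$. Since the logarithm is monotone, its absolute value is at most $\delta$, as required.

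The only point needing care, and the one I expect to be the obstacle, is that the denominator is strictly positive, so that $R_nH_n[\sigma]$ is a finite real number and the ratio makes sense. Write $\sum_sT_n(\sigma|s)e^{-H_n[s]}\ge \min_s e^{-H_n[s]}\sum_sT_n(\sigma|s)$; the first factor is positive because $\Omega_n$ is finite. For the second factor I would use the product structure $T_n=\prod_{x}t(\sigma_x|s_{\phi_n^{-1}(x)})$ together with the block-flip symmetry of R1: for each block, the all-plus and all-minus block configurations are exchanged by the $\mathds{Z}_2$-action. Together with R2 and R3 this shows the all-$\sigma_x$ configuration on the block $\phi_n^{-1}(x)$ gives $t(\sigma_x|\cdot)\ge 1/2$. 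Choosing $s$ to be aligned with $\sigma$ on every block then gives $T_n(\sigma|s)>0$, hence $\sum_sT_n(\sigma|s)>0$.

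This positivity is in any case implicit in the well-definedness of $R_n:\mathcal{V}_n\to\mathcal{V}_{n-1}$ assumed in the paper. Once it is in place, the estimate is a monotonicity argument and needs no further structure of $t$.
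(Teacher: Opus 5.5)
Your argument is correct and is essentially the paper's proof. Like the paper, you sandwich $e^{-G_n}$ between $e^{\mp\|G_n-H_n\|_\infty}e^{-H_n}$, then use nonnegativity of $T_n(\sigma|s)$ and monotonicity of the logarithm. Your extra check that $\sum_sT_n(\sigma|s)>0$, which the paper leaves implicit, is valid; it also follows from R1 and R2 alone via the paper's later normalization $\sum_sT_n(\sigma|s)=2^{|\Lambda_n|-|\Lambda_{n-1}|}$, so R3 is not needed.
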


\begin{proof}
    Let us consider $G_n=H_n+(G_n-H_n)$. We shall compute
    \begin{equation}
        R_nG_n(\sigma)=-\ln\left(\sum_sT(\sigma|s)e^{-H_n(s)-(G_n-H_n)(s)}\right).
    \end{equation}
    Since $\exp$ as well as $T(\sigma|s)$ are non-negative, it follows that
    \begin{equation}
        e^{-\|G_n-H_n\|_\infty}\sum_sT(\sigma|s)e^{-H_n(s)}\leq\sum_sT(\sigma|s)e^{-H_n(s)-(G_n-H_n)(s)}\leq e^{\|G_n-H_n\|}\sum_sT(\sigma|s)e^{-H_n(s)}.
    \end{equation}
    These bounds together with the fact that the logarithm is increasing then gives us
    \begin{align}
        R_nH_n(\sigma)-\|G_n-H_n\|_\infty\leq R_nG_n(\sigma)\leq R_nH_n(\sigma)+\|G_n-H_n\|_\infty\nonumber\\
        \implies \|R_nG_n-R_nH_n\|_\infty\leq\|G_n-H_n\|_\infty.
    \end{align}
\end{proof}

Let $H\in \mathcal{V}_\infty$. At first we want to isolate the part of $H$ that troubles the commutation between $R_n$ and $\omega_{nm}$. Therefore, we want to partition $\Lambda_n$ into fundamental domains $Q_1^{n},...,Q_b^{n}$ such that $\phi_n^{-1}(x)\subset Q_i^n$ for all $x\in Q_i^{n-1}$. In words, we demand that the RG maps the spins in $Q_i^n$ exactly into $Q_i^{n-1}$. This allows us to define the defect at level $n$ via
\begin{equation}\label{Eq3}
    \Delta H_n(\sigma):=H_n(\sigma)-\sum_{i=1}^bH_{n-1}(\sigma_{Q_i^n}).
\end{equation}
Recall, that by definition of the fundamental domains, the map $(\pi_{nn-1})_V|_{V(Q_i^n}:V(Q_i^n)\to V(\Lambda_{n-1})$ is a bijection. Therefore, it is reasonable to set
\begin{equation}
    \sigma_{Q_i^n}\in \Omega_{\Lambda_{n-1}},\quad (\sigma_{Q_i^n})_x:=\sigma_{(\pi_{nn-1})_V|_{V(Q_i^n)}^{-1}(x)}.
\end{equation}

\begin{lemma}\label{Lem4}
    Let $H\in\mathcal{V}_\infty$. Then we find that
    \begin{equation}
        \|\omega_{n-1m}R_nH_n-R_{m+1}H_{m+1}\|_\infty\leq\sum_{k=m+2}^n\frac{1}{b^{k-1-m}}\|\Delta H_k\|_\infty.
    \end{equation}
\end{lemma}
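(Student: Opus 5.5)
The plan is to telescope the difference and to control each increment using the defect $\Delta H_k$, Lemma \ref{Lem3}, and the contraction property of the projections. Since $\omega_{m m}$ is the identity, I write
\begin{equation*}
\omega_{n-1m}R_nH_n-R_{m+1}H_{m+1}=\sum_{k=m+2}^n\Big(\omega_{k-1m}R_kH_k-\omega_{k-2m}R_{k-1}H_{k-1}\Big).
\end{equation*}
By the triangle inequality it suffices to show, for each $k$ in the sum,
\begin{equation*}
\|\omega_{k-1m}R_kH_k-\omega_{k-2m}R_{k-1}H_{k-1}\|_\infty\leq b^{-(k-1-m)}\|\Delta H_k\|_\infty .
\end{equation*}

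The key step is an exact commutation identity for the \emph{decoupled} part $\tilde H_k(\sigma):=\sum_{i=1}^bH_{k-1}(\sigma_{Q_i^k})=H_k-\Delta H_k$, namely
\begin{equation*}
\omega_{k-1k-2}R_k\tilde H_k=R_{k-1}H_{k-1}.
\end{equation*}

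First I show that $R_k\tilde H_k$ factorizes. The fundamental domains were chosen so that $\phi_k^{-1}(x)\subset Q_i^k$ for all $x\in Q_i^{k-1}$. Hence the product kernel $T_k=\prod_{x\in\Lambda_{k-1}}t(\sigma_x|s_{\phi_k^{-1}(x)})$ splits as a product over $i$ of kernels that only involve $s_{Q_i^k}$ and $\sigma_{Q_i^{k-1}}$. The Boltzmann weight $e^{-\tilde H_k}$ splits in the same way. The sum over $s$ therefore factorizes, giving
\begin{equation*}
e^{-R_k\tilde H_k(\sigma)}=\prod_{i=1}^b\sum_{s'\in\Omega_{\Lambda_{k-1}}}T_{k-1}(\sigma_{Q_i^{k-1}}|s')e^{-H_{k-1}(s')}.
\end{equation*}
This requires identifying the kernel restricted to $Q_i^k$ with $T_{k-1}$ via the bijection $\pi_{kk-1}|_{Q_i^k}$. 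That identification should follow from the compatibility $\pi\circ\phi=\phi\circ\pi$ in condition i), together with the fact that the same single-block rule $t$ is used at every level. The result is $R_k\tilde H_k(\sigma)=\sum_i(R_{k-1}H_{k-1})(\sigma_{Q_i^{k-1}})$.

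Next I apply $\omega_{k-1k-2}=\frac1b(\pi_{k-1k-2})_*$ to this expression. The pullback configuration $\pi^*_{k-1k-2}\tau$ restricted to any fundamental domain $Q_i^{k-1}$ is just $\tau$. So each of the $b$ summands equals $R_{k-1}H_{k-1}(\tau)$, and the prefactor $1/b$ gives the identity. Then, using $\omega_{k-2m}\omega_{k-1k-2}=\omega_{k-1m}$,
\begin{equation*}
\omega_{k-1m}R_kH_k-\omega_{k-2m}R_{k-1}H_{k-1}=\omega_{k-1m}\big(R_kH_k-R_k(H_k-\Delta H_k)\big).
\end{equation*}

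To finish, I note that $\|(\pi_{nm})_*v\|_\infty\leq\|v\|_\infty$, because it is just evaluation at pulled-back configurations. Hence $\|\omega_{k-1m}v\|_\infty\leq b^{-(k-1-m)}\|v\|_\infty$. Lemma \ref{Lem3} gives $\|R_kH_k-R_k(H_k-\Delta H_k)\|_\infty\leq\|\Delta H_k\|_\infty$, which yields the per-step bound above. Summing over $k$ completes the argument.

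I expect the main obstacle to be the factorization step. There one must check carefully that the labelling of the fundamental domains at levels $k$ and $k-1$ is consistent, so that the blocks within $Q_i^k$ map exactly to $Q_i^{k-1}$ with the correct identifications. One must also check that the restricted kernel really coincides with $T_{k-1}$ under $\pi_{kk-1}$. I would verify this first directly from the covering compatibility relations in i) and ii). A secondary point is that $R_n$ is only defined up to an additive constant (the functions are taken modulo $\mathds{R}$); I would fix the representative given by the defining formula, which is the convention under which Lemma \ref{Lem3} was already applied.
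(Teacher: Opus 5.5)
Your proposal is correct and follows the paper's proof essentially step for step. Like the paper, you derive the exact identity $\omega_{k-1,k-2}R_kH_k^Q=R_{k-1}H_{k-1}$ for the decoupled Hamiltonian $H_k^Q=H_k-\Delta H_k$ by factorizing $T_k$ over the fundamental domains. You then combine Lemma~\ref{Lem3} with the $b^{-(k-1-m)}$ contraction of $\omega_{k-1,m}$ and sum the same telescoping increments; the paper only differs cosmetically by presenting the case $n-m=2$ first.
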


\begin{proof}
    We want to prove this by induction on the difference of $n$ and $m$. Let us first assume that $n-m=2$. In that case, we shall compute
    \begin{align}
        \omega_{n-1m}R_nH_n(\sigma)&=-\frac{1}{b}\ln\left(\sum_{s\in\Omega_n}T_n(\pi_{n-1m}^*\sigma|s)e^{-H_n(s)}\right)\nonumber\\
        &=-\frac{1}{b}\ln\left(\sum_{s\in\Omega_n}T_n(\pi_{n-1m}^*\sigma|s)e^{-\sum_{i=1}^bH_{n-1}(s_{Q_i^n})-\Delta H_n(s)}\right).
    \end{align}
    Since the kernel $T_n$ is simply a product over blocks contained in one of the $Q_i^n$, we find that
    \begin{align}
        \ln\left(\sum_{s\in\Omega_n}T_n(\pi_{n-1m}^*\sigma|s)e^{-\sum_{i=1}^bH_{n-1}(s_{Q_i^n})}\right)&=\ln\left(\prod_{i=1}^b\sum_{s\in\Omega_{n-1}}T_{n-1}\left((\pi_{n-1m}^*\sigma)_{Q_i^{n-1}}|s\right)e^{-H_{n-1}(s)}\right)\nonumber\\
        &=\sum_{i=1}^b\ln\left(\sum_{s\in\Omega_{n-1}}T_{n-1}\left((\pi_{n-1m}^*\sigma)_{Q_i^{n-1}}|s\right)e^{-H_{n-1}(s)}\right).
    \end{align}
    However, we note that $(\pi_{n-1m}^*\sigma)_{Q_i^{n-1}}=\sigma$ up to a potential translation. By translation invariance of $R_{n-1}$ we therefore conclude that
    \begin{equation}
        -\ln\left(\sum_{s\in\Omega_n}T_n(\pi_{n-1m}^*\sigma|s)e^{-\sum_{i=1}^bH_{n-1}s_{Q_i^n}}\right)=bR_{n-1}H_{n-1}(\sigma).
    \end{equation}
    and therefore $\omega_{n-1m}R_nH_n^Q=R_{n-1}H_{n-1}$ where $H_{n}^Q\in\mathcal{V}_n$ denotes the Hamilton function $H_{n}^Q(s):=\sum_{i=1}^bH_{n-1}(s_{Q_i^n})$.
    This implies that
    \begin{equation}
        \|\omega_{n-1m}R_nH_n-R_{n-1}H_{n-1}\|_\infty=\|\omega_{n-1m}(R_nH_n-R_nH_n^Q)\|_\infty=\frac{1}{b}\left\|R_nH_n-R_nH_{n-1}^Q\right\|_\infty.
    \end{equation}
    At last, we may use Lemma \ref{Lem3} giving us
    \begin{equation}
        \left\|R_nH_n-R_nH_{n}^Q\right\|_\infty\leq\|H_n-H_{n}^Q\|_\infty=\|\Delta H_n\|_\infty.
    \end{equation}
    Now, for arbitrary difference $n-m\geq2$ we may use a telescoping argument. Precisely, we find that
    \begin{align}
        \|\omega_{n-1m}R_nH_n-R_{m+1}H_{m+1}\|_\infty&\leq\sum_{k=m}^{n-2}\|\omega_{km}(\omega_{k+1k} R_{k+2}H_{k+2}-R_{k+1}H_{k+1})\|_\infty\nonumber\\
        &\leq\sum_{k=m}^{n-2}\frac{1}{b^{k-m}}\|\omega_{k+1k}R_{k+2}H_{k+2}-R_{k+1}H_{k+1}\|_\infty\nonumber\\
        &\leq\sum_{k=m}^{n-2}\frac{1}{b^{k+1-m}}\|\Delta H_{k+2}\|_\infty.
    \end{align}
\end{proof}

This result confirms our intuition that $\Delta H_n$ is the part of the Hamilton function that is responsible for the non-commutation of the RGT and the projections. The first application of this result is the following. Let us compute
\begin{equation}
    \|\omega_{n-1m}R_nH_n-\omega_{n-2m}R_{n-1}H_{n-1}\|_\infty\leq\frac{1}{b^{n-2-m}}\|\omega_{n-1n-2}R_nH_n-R_{n-1}H_{n-1}\|_\infty\leq\frac{b^{m+1}\|\Delta H_n\|_\infty}{b^{n}}
\end{equation}
In case that the sum
\begin{equation}
    \sum_{n=m+1}^\infty\frac{\|\Delta H_n\|_\infty}{b^n}
\end{equation}
is finite, the sequence $\omega_{n-1m}R_nH_n$ satisfies the condition from Lemma \ref{Lem1} and hence it converges. This motivates the following theorem:

\begin{theorem}\label{Thm1}
    The subspace
    \begin{equation}
        \mathcal{V}_\mathrm{phys}:=\left\{v\in \mathcal{V}_\infty\,:\,\exists C>0,1>\alpha\geq0,\,\,\left\|v_n(\sigma)-\sum_{i=1}^bv_{n-1}(\sigma_{Q_i})\right\|_\infty\leq Cb^{\alpha n}\right\}\subset \mathcal{V}_\infty
    \end{equation}
    satisfies $\mathcal{V}_\mathrm{phys}\subset\mathscr{D}(R)$.
\end{theorem}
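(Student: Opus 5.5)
The plan is to show that for every $v\in\mathcal{V}_\mathrm{phys}$ and every fixed $m\in\mathds{N}$ the sequence $a_n:=\omega_{n-1m}R_nv_n\in\mathcal{V}_m$ satisfies the summability hypothesis of Lemma \ref{Lem1}. Lemma \ref{Lem1} then yields convergence, and since $m$ is arbitrary, $v\in\mathscr{D}(R)$. Because $\mathcal{V}_m$ is finite dimensional, I would identify it with some $\mathds{R}^d$ and use the sup-norm $\|\cdot\|_\infty$ throughout; this is allowed by the footnote to Lemma \ref{Lem1}.

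The first step is the increment bound already computed after Lemma \ref{Lem4}. For $n\geq m+2$, I would write
\begin{equation}
a_n-a_{n-1}=\omega_{n-2m}\left(\omega_{n-1n-2}R_nv_n-R_{n-1}v_{n-1}\right).
\end{equation}
Two facts are needed here. First, $\omega_{km}$ contracts the sup-norm by the factor $b^{-(k-m)}$, because $(\pi_{km})_*$ only evaluates at pulled-back configurations and so does not increase the sup-norm. Second, the case $n-m=2$ of Lemma \ref{Lem4}, which rests on Lemma \ref{Lem3}, gives $\|\omega_{n-1n-2}R_nv_n-R_{n-1}v_{n-1}\|_\infty\leq b^{-1}\|\Delta v_n\|_\infty$. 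Combining the two gives
\begin{equation}
\|a_n-a_{n-1}\|_\infty\leq b^{m+1-n}\|\Delta v_n\|_\infty.
\end{equation}
Here $\Delta v_n$ is the defect \eqref{Eq3}, which is exactly the quantity bounded in the definition of $\mathcal{V}_\mathrm{phys}$ (with $Q_i=Q_i^n$).

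The second step inserts the hypothesis $\|\Delta v_n\|_\infty\leq Cb^{\alpha n}$. This gives
\begin{equation}
\sum_{n\geq m+2}\|a_n-a_{n-1}\|_\infty\leq Cb^{m+1}\sum_{n\geq m+2}b^{-(1-\alpha)n}.
\end{equation}
The right-hand side is a geometric series with ratio $b^{-(1-\alpha)}<1$, since $b>1$ and $\alpha<1$, so it is finite. The finitely many terms with $n\leq m+1$, together with the undefined starting index $a_{m}$, do not affect summability, so I would simply start the sequence at $n=m+1$. Lemma \ref{Lem1} then gives convergence of $(a_n)$ for each $m$, which proves $\mathcal{V}_\mathrm{phys}\subset\mathscr{D}(R)$.

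The main obstacle is justifying the contraction of $\omega_{km}$ and checking that the partition into fundamental domains $Q_i^n$ used in Lemma \ref{Lem4} is the same one used in the definition of $\mathcal{V}_\mathrm{phys}$. I would fix these partitions once for all levels, compatibly with $\phi_n$ as required before \eqref{Eq3}. I would also remark that $\mathcal{V}_\mathrm{phys}$ is indeed a subspace: for a sum or scalar multiple of elements, the larger $C$ and larger $\alpha$ still satisfy the defining bound, and $\Delta$ is linear. This is not needed for the inclusion itself, but it justifies calling $\mathcal{V}_\mathrm{phys}$ a subspace.
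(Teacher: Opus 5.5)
Your proposal is correct and is essentially the paper's argument: the paper also bounds the increments $\|\omega_{n-1m}R_nv_n-\omega_{n-2m}R_{n-1}v_{n-1}\|_\infty\leq b^{m+1-n}\|\Delta v_n\|_\infty$ using the base case of Lemma \ref{Lem4} together with the contraction of $\omega$, then sums the geometric series coming from $\|\Delta v_n\|_\infty\leq Cb^{\alpha n}$ and invokes Lemma \ref{Lem1}. One small slip in your side remark: closure under addition needs the constant $C_1+C_2$ rather than the larger of the two, but this plays no role in the inclusion.
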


\begin{proof}
    Let $H\in \mathcal{V}_\mathrm{phys}$. We quickly compute
    \begin{equation}
        \sum_{n=m+1}^\infty\frac{\|\Delta H_n\|_\infty}{b^n}\leq\sum_{n=m+1}^\infty Cb^{(\alpha-1)n}.
    \end{equation}
    Since $\alpha-1<0$ this is simply a geometric series and we find
    \begin{equation}
        \sum_{n=m+1}^\infty Cb^{(\alpha-1)n}=\frac{Cb^{(\alpha-1)(m+1)}}{1-b^{\alpha-1}}=\frac{Cb^{m(\alpha-1)+\alpha}}{b-b^\alpha}<\infty.
    \end{equation}
\end{proof}

With this theorem we obtain a partial answer to the questions we stated at the end of the previous section. In order to obtain a satisfactory answer we also want to establish that $R(\mathcal{V}_\mathrm{phys})\subseteq \mathcal{V}_\mathrm{phys}$ and that it contains at least a part of the physically interesting models.

\begin{theorem}\label{Thm2}
    The subspace $\mathcal{V}_\mathrm{phys}$ is invariant under the RGT, i.e.\ $R(\mathcal{V}_\mathrm{phys})\subseteq \mathcal{V}_\mathrm{phys}$.
\end{theorem}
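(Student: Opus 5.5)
The plan is to show that if $H\in\mathcal{V}_\mathrm{phys}$ with constants $C,\alpha$, then $w:=RH$ has defect $\Delta w_n(\sigma)=w_n(\sigma)-\sum_{i=1}^b w_{n-1}(\sigma_{Q_i^n})$ bounded by $C'b^{\alpha n}$, with the \emph{same} exponent $\alpha$ and a new constant $C'$. By Theorem \ref{Thm1}, $w$ is well defined with $w_m=\lim_{n\to\infty}\omega_{n-1m}R_nH_n$. The strategy is to compare $w_m$ with the single finite-volume RG step $R_{m+1}H_{m+1}$, and then show that the finite-volume RG step roughly preserves the defect.

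\textbf{Step 1 (approximating $w_m$ by one RG step).} Let $n\to\infty$ in Lemma \ref{Lem4} and insert $\|\Delta H_k\|_\infty\le Cb^{\alpha k}$. This gives
\begin{equation}
\|w_m-R_{m+1}H_{m+1}\|_\infty\le\sum_{k=m+2}^\infty b^{m+1-k}Cb^{\alpha k}=Cb^{m+1}\sum_{k=m+2}^\infty b^{(\alpha-1)k}=\frac{Cb^{\alpha(m+2)}}{b-b^{\alpha}}=:C_1b^{\alpha m}.
\end{equation}
Write $w_m=R_{m+1}H_{m+1}+E_m$ with $\|E_m\|_\infty\le C_1b^{\alpha m}$.

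\textbf{Step 2 (splitting the defect).} Since $\sigma\mapsto\sigma_{Q_i^n}$ is a bijective relabelling, $\sup$-norms are preserved under it. Hence
\begin{equation}
\Delta w_n(\sigma)=\Big[R_{n+1}H_{n+1}(\sigma)-\sum_{i=1}^bR_nH_n(\sigma_{Q_i^n})\Big]+E_n(\sigma)-\sum_{i=1}^bE_{n-1}(\sigma_{Q_i^n}).
\end{equation}
The error terms together contribute at most $C_1b^{\alpha n}+bC_1b^{\alpha(n-1)}\le 2C_1b^{\alpha n}$.

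\textbf{Step 3 (the main term).} Define $H^Q_{n+1}(s):=\sum_iH_n(s_{Q_i^{n+1}})$. The key identity is
\begin{equation}
R_{n+1}H^Q_{n+1}(\sigma)=\sum_{i=1}^bR_nH_n(\sigma_{Q_i^n}),
\end{equation}
which I expect to be the main obstacle. I would prove it exactly as in the first part of the proof of Lemma \ref{Lem4}. By the compatibility of the partitions, $\phi_{n+1}^{-1}(x)\subset Q_i^{n+1}$ for every $x\in Q_i^n$. Therefore $T_{n+1}(\sigma|s)$ factorizes into $\prod_iT_n(\sigma_{Q_i^n}|s_{Q_i^{n+1}})$, up to the identifications via $\pi_{n+1n}$, which are isomorphisms on fundamental domains. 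Translation invariance of $t$ is needed so that each factor is exactly the level-$n$ kernel. The sum over $s$ then splits into a product over $i$, and taking $-\ln$ turns the product into the sum over $i$. Here one must check that the block structure $\phi_{n+1}$ restricted to $Q_i^{n+1}$ corresponds to $\phi_n$ on $\Lambda_n$. This should follow from the relations $\pi_{\infty m}\circ\phi=\phi_n\circ\pi_{\infty n}$ together with the chosen partitions. Given the identity, Lemma \ref{Lem3} bounds the main term:
\begin{equation}
\|R_{n+1}H_{n+1}-R_{n+1}H^Q_{n+1}\|_\infty\le\|\Delta H_{n+1}\|_\infty\le Cb^{\alpha}b^{\alpha n}.
\end{equation}

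\textbf{Conclusion.} Combining Steps 2 and 3 gives $\|\Delta w_n\|_\infty\le(Cb^\alpha+2C_1)b^{\alpha n}$, so $RH\in\mathcal{V}_\mathrm{phys}$ with the same $\alpha$. For the finitely many small $n$ where indices are out of range (e.g. $n=1$), enlarge the constant, which is harmless.
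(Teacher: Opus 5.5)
Your proposal is correct and follows essentially the same route as the paper: Lemma~\ref{Lem4} with $n\to\infty$ replaces $(RH)_m$ by $R_{m+1}H_{m+1}$ up to an $O(b^{\alpha m})$ error, and the block factorization $R_{m+1}H^Q_{m+1}(\sigma)=\sum_i R_mH_m(\sigma_{Q_i^m})$ together with Lemma~\ref{Lem3} bounds the main term by $\|\Delta H_{m+1}\|_\infty$. One small slip: $bC_1b^{\alpha(n-1)}=b^{1-\alpha}C_1b^{\alpha n}\geq C_1b^{\alpha n}$, so the error terms contribute $(1+b^{1-\alpha})C_1b^{\alpha n}$ rather than at most $2C_1b^{\alpha n}$; this does not affect the conclusion, and with the correction your constant $Cb^{\alpha}+(1+b^{1-\alpha})C_1$ coincides exactly with the paper's $\tilde C$.
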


\begin{proof}
    Let $H\in \mathcal{V}_\phys$. We set $G=RH\in \mathcal{V}_\infty$\footnote{$G$ is well-defined by Theorem \ref{Thm1}.}. Our aim is to bound the difference
    \begin{equation}
        \Delta G_m(\sigma)=G_m(\sigma)-\sum_{i=1}^bG_{m-1}(\sigma_{Q_i^m}).
    \end{equation}
    By inserting the definition of $R$, this can be written as
    \begin{equation}
        \Delta G_m(\sigma)=\lim_{n\to\infty}\left[\omega_{n-1m}R_nH_n(\sigma)-\sum_{i=1}^b\omega_{n-2m-1}R_{n-1}H_{n-1}(\sigma_{Q_i^m})\right].
    \end{equation}
    By Lemma \ref{Lem4} we find that
    \begin{equation}
        \|\omega_{n-1m}R_nH_n-R_{m+1}H_{m+1}\|_\infty\leq\sum_{k=m+2}^n\frac{1}{b^{k-1-m}}\|\Delta H_k\|_\infty.
    \end{equation}
    Likewise, we find that
    \begin{equation}
        \|\omega_{n-2m-1}R_{n-1}H_{n-1}-R_mH_m\|_\infty\leq\sum_{k=m+1}^{n-1}\frac{1}{b^{k-m}}\|\Delta H_k\|.
    \end{equation}
    Using the triangular inequality, we obtain the following bound:
    \begin{align}
        \left|\omega_{n-1m}R_nH_n(\sigma)-\sum_{i=1}^b\omega_{n-2m-1}R_{n-1}H_{n-1}(\sigma_{Q_i^m})\right|\leq\sum_{k=m+2}^n\frac{1}{b^{k-1-m}}\|\Delta H_k\|_\infty\nonumber\\+b\sum_{k=m+1}^{n-1}\frac{1}{b^{k-m}}\|\Delta H_k\|
        +\left|R_{m+1}H_{m+1}(\sigma)-\sum_{i=1}^bR_mH_m(\sigma_{Q_i^{m+1}})\right|.
    \end{align}
    For the remaining term, we once more use that the defect measures by how much the application of the RGT at level $m+1$ differs from applying it at level $m$, i.e.
    \begin{equation}
        R_{m+1}H_{m+1}(\sigma)\in\left[\sum_{i=1}^bR_mH_m(\sigma_{Q_i^{m+1}})-\|\Delta H_{m+1}\|_\infty,\sum_{i=1}^bR_mH_m(\sigma_{Q_i^{m+1}})-\|\Delta H_{m+1}\|_\infty\right].
    \end{equation}
    Thus, the whole bound is given by
    \begin{align}
        \left|\omega_{n-1m}R_nH_n(\sigma)-\sum_{i=1}^b\omega_{n-2m-1}R_{n-1}H_{n-1}(\sigma_{Q_i^m})\right|\leq\sum_{k=m+2}^n\frac{1}{b^{k-1-m}}\|\Delta H_k\|_\infty\nonumber\\+b\sum_{k=m+1}^{n-1}\frac{1}{b^{k-m}}\|\Delta H_k\|
        +\|\Delta H_{m+1}\|_\infty.
    \end{align}
    Now, we may use the fact that $H\in\mathcal{V}_\phys$ and therefore $\|\Delta H_n\|_\infty\leq Cb^{\alpha n}$ for some $\alpha\in[0,1)$. This allows us to compute
    \begin{align}
        \sum_{k=m+2}^n\frac{1}{b^{k-1-m}}\|\Delta H_k\|_\infty+b\sum_{k=m+1}^{n-1}\frac{1}{b^{k-m}}\|\Delta H_k\|
        +\|\Delta H_{m+1}\|_\infty\nonumber\\
        \leq C\left(b^{m+1}\sum_{k=m+2}^nb^{(\alpha-1)k}+b^{m+1}\sum_{k=m+1}^{n-1}b^{(\alpha-1)k}+b^{\alpha(m+1)}\right)\nonumber\\
        =C\left(\frac{b^{m+1}\left(b^{(m+2)(\alpha-1)}-b^{(n+1)(\alpha-1)}\right)}{1-b^{\alpha-1}}+\frac{b^{m+1}\left(b^{(m+1)(\alpha-1)}-b^{n(\alpha-1)}\right)}{1-b^{\alpha-1}}+b^{\alpha(m+1)}\right).
    \end{align}
    In the limit $n\to\infty$ the terms containing $b^{n(\alpha-1)}$ vanish since $b>1$ and $(\alpha-1)<0$. Thus, we conlcude that
    \begin{equation}
        \|\Delta G_m\|_\infty\leq C\left(\frac{b^\alpha(1+b^{\alpha-1})}{1-b^{\alpha-1}}+b^\alpha\right)b^{\alpha m}.
    \end{equation}
    Hence the renormalized Hamilton function $G$ admits a bound
    \begin{equation}
        \|\Delta G_m\|_\infty\leq\tilde Cb^{\alpha m},\quad\tilde C=C\left(\frac{b^\alpha(1+b^{\alpha-1})}{1-b^{\alpha-1}}+b^\alpha\right)
    \end{equation}
    and therefore $G\in \mathcal{V}_\phys$.
\end{proof}

Theorem \ref{Thm1} and \ref{Thm2} together prove that $R:\mathcal{V}_\phys\to \mathcal{V}_\phys$ is well-defined. In other words, on $\mathcal{V}_\phys$ we can implement a pathology free version of the RGT. Wilson also proposed that the renormalization map should be differentiable. Here, we want to establish that $R$ is at least continuous. In order to do so, we must first equip $\mathcal{V}_\phys$ with a suitable norm $\|\cdot\|$. By leveraging the decomposition from Equation (\ref{Eq3}) once more, we may compute (recall that $|\Lambda_n|=b\cdot|\Lambda_{n-1}|$)
\begin{equation}
    \frac{\|H_n\|_\infty}{|\Lambda_n|}\leq\frac{1}{b}\sum_{i=1}^b\frac{\|H_{n-1}\|_\infty}{|\Lambda_{n-1}|}+\frac{\|\Delta H_n\|_\infty}{|\Lambda_n|}=\frac{\|H_{n-1}\|_\infty}{|\Lambda_{n-1}|}+\frac{\|\Delta H_n\|_\infty}{|\Lambda_n|}.
\end{equation}
Iterating this procedure implies the bound
\begin{equation}
    \frac{\|H_n\|_\infty}{|\Lambda_n|}\leq\frac{\|H_0\|_\infty}{|\Lambda_0|}+\sum_{k=1}^n\frac{\|\Delta H_k\|}{|\Lambda_k|}.
\end{equation}
We conclude that for any $H\in \mathcal{V}_\phys$ the sequence $\|H_n\|_\infty/|\Lambda_n|$ is bounded above from 
\begin{equation}
    \frac{\|H_0\|_\infty}{|\Lambda_0|}+\sum_{k=1}^\infty\frac{\|\Delta H_k\|}{|\Lambda_k|}<\infty.
\end{equation}
Further, it is also monotonously increasing as
\begin{equation}
    \frac{H_n(\pi_{nn-1}^*\sigma)}{|\Lambda_n|}=\frac{H_{n-1}(\sigma)}{|\Lambda_{n-1}|}\quad\forall\sigma\in\Omega_{\Lambda_{n-1}}\implies\frac{\|H_n\|_\infty}{|\Lambda_n|}\geq\frac{\|H_{n-1}\|_\infty}{|\Lambda_{n-1}|}
\end{equation}
and hence it converges. This motivates the definition
\begin{equation}
    \|H\|:=\lim_{n\to\infty}\frac{\|H_n\|_\infty}{|\Lambda_n|}.
\end{equation}

\begin{theorem}\label{Thm3}
    The map $R:\mathcal{V}_\phys\to \mathcal{V}_\phys$ is continuous regarding the norm $\|\cdot\|$ on $\mathcal{V}_\phys$.
\end{theorem}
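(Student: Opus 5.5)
I will show that $R$ is Lipschitz with constant $1$ with respect to $\|\cdot\|$, i.e.\ $\|RG-RH\|\leq\|G-H\|$ for all $G,H\in\mathcal{V}_\phys$. Continuity then follows at once. The tools are Lemma \ref{Lem3}, which makes every finite-volume map $R_n$ non-expansive in $\|\cdot\|_\infty$, and the scaling behaviour of the projections $\omega_{nm}$.

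\textbf{Step 1: the projections.} I first record how $\omega_{nm}$ acts on sup norms. Since $\omega_{nm}v=b^{-(n-m)}v\circ\pi_{nm}^*$ and $\pi_{nm}^*$ maps $\Omega_{\Lambda_m}$ into $\Omega_{\Lambda_n}$, we get
\begin{equation}
\|\omega_{nm}v\|_\infty\leq b^{-(n-m)}\|v\|_\infty .
\end{equation}
Using $|\Lambda_n|=b^{n-m}|\Lambda_m|$, this reads $\|\omega_{nm}v\|_\infty/|\Lambda_m|\leq\|v\|_\infty/|\Lambda_n|$.

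\textbf{Step 2: finite-level estimate.} Fix $m$ and take $n>m+1$. By Lemma \ref{Lem3} and linearity of $\omega_{n-1m}$,
\begin{equation}
\frac{\|\omega_{n-1m}R_nG_n-\omega_{n-1m}R_nH_n\|_\infty}{|\Lambda_m|}\leq\frac{\|R_nG_n-R_nH_n\|_\infty}{|\Lambda_{n-1}|}\leq b\,\frac{\|G_n-H_n\|_\infty}{|\Lambda_n|}.
\end{equation}
The first inequality is Step 1 applied with $n-1$ in place of $n$, using $|\Lambda_{n-1}|=b^{n-1-m}|\Lambda_m|$.

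\textbf{Step 3: limits.} I take $n\to\infty$ on both sides.
\begin{itemize}
\item On the left, the terms converge by Theorem \ref{Thm1} to $(RG)_m-(RH)_m$. Since all these objects live in the finite-dimensional space $\mathcal{V}_m$, the norm passes to the limit.
\item On the right, $G-H$ lies in $\mathcal{V}_\phys$, because the defect is linear, so $\Delta(G-H)_n=\Delta G_n-\Delta H_n$, and is bounded by $(C_G+C_H)b^{\max(\alpha_G,\alpha_H)n}$. So $\mathcal{V}_\phys$ is a genuine subspace, and the right-hand side converges to $b\|G-H\|$.
\end{itemize}
This gives $\|(RG)_m-(RH)_m\|_\infty/|\Lambda_m|\leq b\|G-H\|$. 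Taking $m\to\infty$ yields $\|RG-RH\|\leq b\|G-H\|$, which already proves continuity.

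\textbf{Main obstacle: normalization.} The one delicate point is the volume bookkeeping. The coarse lattice at level $n-1$ has $|\Lambda_{n-1}|=|\Lambda_n|/b$ sites, which is where the stray factor $b$ comes from. The first thing I would try is to see whether the index shift $(Rv)_n=\lim\omega_{n-1\,n}R_{n+1}v_{n+1}$ absorbs this factor and gives constant $1$. Either way the map is Lipschitz, so this affects only the constant, not continuity.

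\textbf{Subsidiary check: $\|\cdot\|$ is a norm.} I also need $\|\cdot\|$ to be a genuine norm on $\mathcal{V}_\phys$. Homogeneity and the triangle inequality pass through the limit. Definiteness follows from monotonicity: if $\|H\|=0$, then every $\|H_n\|_\infty/|\Lambda_n|$ is bounded by the limit and hence vanishes. One caveat is that Hamiltonians are only meaningful modulo constants. If $\|\cdot\|$ is meant as a seminorm, or on a quotient by constants, the estimate above carries over unchanged, since $R_n(H+c)=R_nH+c$.
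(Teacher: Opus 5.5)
Your argument is correct and is essentially the paper's proof: Lemma~\ref{Lem3} makes $R_n$ non-expansive, $\omega_{n-1m}$ contributes the factor $b^{-(n-1-m)}$, and passing to the limits gives $\|RG-RH\|\leq b\|G-H\|$, which is the paper's Lipschitz constant $b$. You should drop the opening claim of constant $1$ and the hope that an index shift absorbs the factor $b$. For the constant elements $(H+C)_n=H_n+b^nC$ one has $R(H+C)=RH+bC$, so $\|R(H+C)-RH\|=b\|C\|$ and the constant $b$ cannot be improved.
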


\begin{proof}
    Let $H,G\in \mathcal{V}_\phys$ and $\varepsilon>0$. We may compute
    \begin{equation}
        \|\omega_{n-1m}(R_nG_n-R_nH_n)\|_\infty\leq\frac{1}{b^{n-1-m}}\|R_nG_n-R_nH_n\|_\infty.
    \end{equation}
    By Lemma \ref{Lem3} we find that $\|R_nG_n-R_nH_n\|_\infty\leq\|G_n-H_n\|_\infty$. Furthermore, we may use that $\|G_n-H_n\|_\infty\leq b^n\|G-H\|$ giving us
    \begin{equation}
        \|\omega_{n-1m}(R_nG_n-R_nH_n)\|_\infty\leq\frac{1}{b^{n-1-m}}b^n\|G-H\|=b^{m+1}\|G-H\|.
    \end{equation}
    It follows that $R$ is Lipschitz continuous with constant $b$ and hence continuous.
\end{proof}

Thus, we successfully established that $R:\mathcal{V}_\phys\to \mathcal{V}_\phys$ is not only well-defined but also continuous regarding a very natural norm as $\|\cdot\|$ is a bound on the maximal possible energy density. However, while it is nice to have a framework which is mathematically sound, we still need to answer the question whether this framework is physically relevant. There are two subquestions to this:
\begin{enumerate}
    \item[i)] Does $\mathcal{V}_\phys$ contain any physically interesting points?
    \item[ii)] Does this framework keep a connection to the physics on the infinite lattice $\Lambda_\infty$? 
\end{enumerate}
We will try to answer them one by one starting with i). In \cite{van_Enter_1993} the authors spend quite some time discussing different possible definitions for physically relevant interactions on the infinite lattice. An interaction in this context is a collection of maps
\begin{equation}
    \Phi=\{\Phi_\Gamma:\Omega_\Gamma\to\mathds{R}\,:\,\Gamma\subset\Lambda_\infty,|\Gamma|<\infty\}.
\end{equation}
The most natural space for the theory of infinite volume Gibbs measures (see also \cite{friedli_velenik_2017}) is the space of absolutely summable and translation-invariant interactions denoted as $\mathcal{B}^1$. An interaction $\Phi$ is called absolutely summable if
\begin{equation}
    \|\Phi\|_{\mathcal{B}^1}:=\sum_{\substack{\Gamma\subset\Lambda_\infty,|\Gamma|<\infty\\0\in\Gamma}}\|\Phi_\Gamma\|_\infty<\infty.
\end{equation}
In our setting, we do not only want to impose translation invariance but invariance under the full symmetry group\footnote{Recall that in case the RGT breaks some of the symmetries, we shall restrict ourselves to the maximal possible symmetry group preserved by the RGT.} $\Iso(\Lambda_\infty)\times\mathds{Z}_2$ which requires that $\Phi_\Gamma(-\sigma_\Gamma)=\Phi_\Gamma(\sigma_\Gamma)$ as well as 
\begin{equation}
    \Phi_\Gamma=\Phi_{g(\Gamma)}\circ(g^{-1}|_\Gamma)^*\qquad\forall g\in\Iso(\Lambda_\infty).
\end{equation}
The following proposition relates these infinite volume interactions to points in $\mathcal{V}_\infty$.

\begin{proposition}
    Let $\Phi$ be an absolutely summable interaction and $\Iso(\Lambda_\infty)\times\mathds{Z}^2$-invariant interaction on $\Lambda_\infty$. Then there is a canonical way to construct a corresponding $H^\Phi\in \mathcal{V}_\infty$. In case that $\Phi$ satisfies the stricter inequality
    \begin{equation}\label{Eq1}
        \sum_{\substack{\Gamma\subset\Lambda_\infty,|\Gamma|<\infty\\0\in\Gamma}}\diam(\Gamma)\cdot\|\Phi_\Gamma\|_\infty<\infty,
    \end{equation}
    we find that $H^\Phi\in \mathcal{V}_\phys$.
\end{proposition}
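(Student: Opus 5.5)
The construction is the periodized Hamiltonian on each $\Lambda_n$. For $\sigma\in\Omega_{\Lambda_n}$ I take the pullback $\pi_{\infty n}^*\sigma\in\Omega$, which is $\mathcal{D}(\pi_{\infty n})$-periodic. I fix a fundamental domain $Q^n_\infty\subset\Lambda_\infty$ of $\pi_{\infty n}$ and set
\begin{equation*}
    H^\Phi_n(\sigma):=\sum_{x\in V(Q^n_\infty)}\ \sum_{\substack{\Gamma\subset\Lambda_\infty,|\Gamma|<\infty\\ x\in\Gamma}}\frac{1}{|\Gamma|}\Phi_\Gamma\big((\pi_{\infty n}^*\sigma)_\Gamma\big).
\end{equation*}
The inner sum converges absolutely. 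By translation invariance it is bounded by $\|\Phi\|_{\mathcal{B}^1}$, so $\|H^\Phi_n\|_\infty\le|\Lambda_n|\,\|\Phi\|_{\mathcal{B}^1}$. The first step is to check that this definition does not depend on the choice of $Q^n_\infty$: the summand is invariant under deck transformations, because $\Phi$ is translation invariant and $\pi_{\infty n}^*\sigma$ is periodic. The same invariance of $\Phi$, together with $\Phi_\Gamma(-\sigma)=\Phi_\Gamma(\sigma)$, gives $H^\Phi_n\in\mathcal{V}_n$.

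Next I verify compatibility, $\omega_{nm}H^\Phi_n=H^\Phi_m$. Take $\sigma\in\Omega_{\Lambda_m}$. Since $\pi_{\infty m}=\pi_{nm}\pi_{\infty n}$, we have $\pi_{\infty n}^*(\pi_{nm}^*\sigma)=\pi_{\infty m}^*\sigma$. A fundamental domain of $\pi_{\infty n}$ splits into $b^{n-m}$ deck-translates of a fundamental domain of $\pi_{\infty m}$. By periodicity each translate contributes $H^\Phi_m(\sigma)$, so $(\pi_{nm})_*H^\Phi_n=b^{n-m}H^\Phi_m$. This is exactly the statement $H^\Phi\in\mathcal{V}_\infty$.

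The main step is the defect bound $\|\Delta H^\Phi_n\|_\infty\le Cb^{\alpha n}$. I lift the partition $Q^n_1,\dots,Q^n_b$ of $\Lambda_n$ to $\Lambda_\infty$ and write $H^\Phi_n$ as a sum over $x$ in the lifted $Q_i$. For each $i$, $H^\Phi_{n-1}(\sigma_{Q^n_i})$ is the same sum evaluated instead on the configuration obtained by periodizing $\sigma|_{Q_i}$ with respect to $\mathcal{D}(\pi_{\infty,n-1})$. The two configurations agree on the lift of $Q_i$. A term $\Phi_\Gamma$ with $x\in\Gamma$ therefore cancels unless $\Gamma$ leaves that lifted region, which forces $d(x,\partial Q_i)\le\diam(\Gamma)$. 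Each uncancelled term contributes at most $2\|\Phi_\Gamma\|_\infty/|\Gamma|$. I then reorganize the sum over boundary layers, counting the vertices $x$ with $d(x,\partial Q_i)=r$, and use translation invariance to bring the sum over $\Gamma\ni x$ back to $\Gamma\ni 0$. This gives
\begin{equation*}
    \|\Delta H^\Phi_n\|_\infty\le 2\sum_{i=1}^b\sum_{r\ge0}\big|\{x\in Q_i: d(x,\partial Q_i)=r\}\big|\sum_{\substack{\Gamma\ni0\\ \diam\Gamma\ge r}}\|\Phi_\Gamma\|_\infty\le 2b\,\max_{r}\big|\{x: d(x,\partial Q_i)=r\}\big|\sum_{\Gamma\ni0}(\diam\Gamma+1)\|\Phi_\Gamma\|_\infty .
\end{equation*}
By Equation (\ref{Eq1}) together with absolute summability, the last sum is finite.

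It remains to control the size of a boundary layer, $|\{x\in Q_i:d(x,\partial Q_i)=r\}|\lesssim|\partial Q^n_i|$. I expect this geometric estimate to be the main obstacle, since the framework says nothing abstract about it. I would assume the fundamental domains are chosen van Hove–like (in the torus example, cubes), so that $|\partial Q^n_i|\le C'|\Lambda_{n-1}|^{(d-1)/d}=C'b^{(n-1)(d-1)/d}$ and each layer is at most a constant times the boundary. That gives $\alpha=(d-1)/d<1$ and hence $H^\Phi\in\mathcal{V}_\phys$. For general graphs, I would add exactly this sublinear boundary growth of the $Q^n_i$ as a hypothesis.
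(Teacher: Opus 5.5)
Your proposal is correct and is essentially the paper's argument. Your $1/|\Gamma|$-weighted sum over the sites of a fundamental domain coincides with the paper's sum over deck classes $[\Gamma]\in S_n$, compatibility comes from the same splitting into $b^{n-m}$ translates, and the defect is controlled as in the paper by the interactions straddling the lifted $Q_i$, sorted into boundary layers so that $\sum_{r}\sum_{\Gamma\ni 0,\,\diam(\Gamma)\geq r}\|\Phi_\Gamma\|_\infty$ yields the $\diam$-weighted norm. The paper likewise only asserts the geometric input (layer size at most $|\partial Q_i|\leq Cb^{\alpha n}$ with $\alpha<1$), so stating it as an explicit hypothesis is, if anything, more careful; the same goes for keeping the factor $2$ for the unmatched $H_{n-1}$ terms, which the paper's displayed bound omits.
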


\begin{proof}
    Let us consider the coset $S_n=\{\Gamma\subset\Lambda_\infty\,:\,|\Gamma|<\infty\}/\mathcal{D}(\pi_{\infty n})$. We then define
    \begin{equation}
        H^{\Phi}_n(\sigma):=\sum_{[\Gamma]\in S_n}\Phi_\Gamma(\pi_{\infty n}^*\sigma).
    \end{equation}
    First, we quickly note that each map is well-defined as 
    \begin{equation}
        \sum_{[\Gamma]\in S_n}|\Phi_\Gamma(\pi_{\infty n}^*\sigma)|\leq\|\Phi\|_{\mathcal{B}^1}\cdot|\Lambda_n|<\infty
    \end{equation}
    and thus the series $\sum_{[\Gamma]\in S_n}\Phi_\Gamma(\pi_{\infty n}^*\sigma)$ converges. The next step is to prove that $H^\Phi\in \mathcal{V}_\infty$, i.e.\ that $\omega_{nm}H_n^\Phi=H_m^\Phi$. Let us therefore consider
    \begin{equation}
        H_n^\Phi(\pi_{nm}^*\sigma)=\sum_{[\Gamma]\in S_n}\Phi_\Gamma(\pi_{\infty m}^*\sigma).
    \end{equation}
    Here, we note that for each equivalence class $[\Gamma]\in S_m$ there exist equivalence classes $[\Gamma_1],...,[\Gamma_{b^{n-m}}]\in S_n$ such that
    \begin{equation}
        [\Gamma]=\bigcup_{i=1}^{b^{n-m}}[\Gamma_i].
    \end{equation}
    Further, we find that $(\pi_{\infty m}^*\sigma)_{\Gamma_i}=(\pi_{\infty m}^*\sigma)_\Gamma$. Thus, we conclude that
    \begin{equation}
        H_n^\Phi(\pi_{nm}^*\sigma)=b^{n-m}\sum_{[\Gamma]\in S_m}\Phi_\Gamma(\pi_{\infty m}^*\sigma)=b^{n-m}H_{m}^\Phi(\sigma)\implies\omega_{nm}H_n^\Phi=H_m^\Phi\,.
    \end{equation}
    For the second statement, let us assume that $\Phi$ satisfies (\ref{Eq1}). We then must prove that $\Delta H_n^\Phi$ satisfies the necessary bound such that $H^\Phi\in \mathcal{V}_\phys$. Let us start by considering
    \begin{equation}
        H_{n}^\Phi(\sigma)-\sum_{i=1}^bH_{n-1}^\Phi(\sigma_{Q_i})=\sum_{[\Gamma]\in S_n}\Phi_\Gamma(\pi_{\infty n}^*\sigma)-\sum_{i=1}^b\sum_{[\Delta]\in S_{n-1}}\Phi_\Delta(\pi_{\infty n-1}^*\sigma_{Q_i}).
    \end{equation}
    Let us now decompose $S_n=S^\loc_n\cup S_n^\bnd$ where
    \begin{equation}
        S_n^\loc:=\{[\Gamma]\in S_n\,:\,\pi_{\infty n}(\Gamma)\in Q_i,i\in\{1,...,b\}\},\qquad S_n^\bnd=S_n\backslash S_n^\loc.
    \end{equation}
    For any $[\Gamma]\in S_n^\loc$ there exists a corresponding $[\Delta(\Gamma)]\in S_{n-1}$ satisfying $\Phi_\Gamma(\pi_{\infty n}^*\sigma)=\Phi_{\Delta(\Gamma)}(\pi_{\infty n-1}^*\sigma_{Q_i})$. Therefore, we conclude that
    \begin{equation}
        \|\Delta _nH_n^\Phi\|_\infty\leq\sum_{[\Gamma]\in S_n^\bnd}\|\Phi_\Gamma\|_\infty=b\sum_{\substack{[\Gamma]\in S_n^\bnd\\\Gamma\cap \pi_{\infty n}^{-1}(Q_1)\ne\emptyset}}\|\Phi_\Gamma\|_\infty\,.
    \end{equation}
    For any $[\Gamma]\in S_n^\bnd,\Gamma\cap\pi_{\infty n}^{-1}(Q_1)\neq\emptyset$ there must exist at least two vertices $x\in Q_1,y\in \Lambda_n\backslash Q_1$ such that $\pi_{\infty n}^{-1}(x)\cap\Gamma\neq\emptyset,\pi_{\infty n}^{-1}(y)\neq\emptyset$. Consequently, we find that $\diam(\Gamma)\geq d(x,y)> d(x,\partial Q_1)$. Therefore, we may bound the previous sum via (by translation invariance we may choose $\tilde x$ to be any point in $\pi_{\infty n}^{-1}(x)$) 
    \begin{equation}
        \sum_{\substack{[\Gamma]\in S_n^\bnd\\\Gamma\cap\pi_{\infty n}^{-1}(Q_1)\ne\emptyset}}\|\Phi_\Gamma\|_\infty\leq\sum_{x\in Q_1}\sum_{\substack{\Gamma\ni\tilde x\\\diam(\Gamma)> d(x,\partial Q_i)}}\|\Phi_\Gamma\|_\infty=\sum_{r=0}^\infty\sum_{\substack{x\in Q_1\\d(x,\partial Q_1)=r}}\sum_{\substack{\Gamma\ni\tilde x\\\diam(\Gamma)>r}}\|\Phi_\Gamma\|_\infty\,.
    \end{equation}
    By translation invariance of the interaction, the last sum is independent of the choice of $x$, i.e.\ $\sum_{\substack{\Gamma\ni\tilde x\\\diam(\Gamma)>r}}\|\Phi_\Gamma\|_\infty=\sum_{\substack{\Gamma\ni0\\\diam(\Gamma)>r}}\|\Phi_\Gamma\|_\infty$. This allows us to consider the second sum in isolation. It is bounded by
    \begin{equation}
        \sum_{\substack{x\in Q_1\\d(x,\partial Q_i)=r}}\leq|\partial Q_1|\leq Cb^{\alpha n}.
    \end{equation}
    where $\alpha<1$ since the boundary scales strictly less than the volume. Thus, we obtain
    \begin{equation}
        \|\Delta H_n^\Phi\|_\infty\leq b\tilde C\cdot b^{\alpha n}\sum_{r=0}^\infty\sum_{\substack{\Gamma\ni0\\\diam(\Gamma)>r}}\|\Phi_\Gamma\|_\infty\,.
    \end{equation}
    For the remaining sum, we note that each $\Gamma$ appears exactly $\diam(\Gamma)$ times such that
    \begin{equation}
        \sum_{r=0}^\infty\sum_{\substack{\Gamma\ni0\\\diam(\Gamma)>r}}\|\Phi_\Gamma\|_\infty=\sum_{\Gamma\ni0}\diam(\Gamma)\cdot\|\Phi_\Gamma\|_\infty=M<\infty.
    \end{equation}
    Hence, we conclude
    \begin{equation}
        \|\Delta H_n^\Phi\|_\infty\leq bM\tilde C\cdot b^{\alpha n}\implies H^\Phi\in \mathcal{V}_\phys.
    \end{equation}
\end{proof}

The main takeaway of this result is that $\mathcal{V}_\phys$ contains all translation-invariant and absolutely summable infinite volume interactions that decay at least linearly with the diameter of the interaction. Of course, in most physical models, the interaction is either of finite range or decays exponentially which are well within the allowed region. Therefore, $\mathcal{V}_\phys$ contains practically all physically interesting models.\smallskip

The second question does not yet have a definite answer as the following key result is still missing.

\begin{conjecture}\label{Conj1}
    Let $\Phi\in\mathcal{B}^1$ such that $H^\Phi\in \mathcal{V}_\phys$. Further, let us assume that there exists a renormalized interaction $\Phi'\in\mathcal{B}^1$ for the infinite volume RGT $T$. Then we find that
    \begin{equation}
        R_nH_n^\Phi=H_{n-1}^{\Phi'}+\delta H_{n-1}
    \end{equation}
    where
    \begin{equation}
        \lim_{n\to\infty}\frac{\|\delta H_n\|_\infty}{|\Lambda_n|}=0.
    \end{equation}
\end{conjecture}

\begin{corollary}
    Let $\Phi\in\mathcal{B}^1$ such that there exists a renormalized interaction $\Phi'\in\mathcal{B}^1$. If further $H^\Phi\in \mathcal{V}_\phys$ then we find that
    \begin{equation}
        RH^\Phi=H^{\Phi'}.
    \end{equation}
\end{corollary}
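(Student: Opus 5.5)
The corollary should follow from Conjecture \ref{Conj1} together with Theorem \ref{Thm1}, which guarantees that $H^\Phi\in\mathcal{V}_\phys\subset\mathscr{D}(R)$. The plan is to identify the limit defining $(RH^\Phi)_m$ explicitly. By definition,
\begin{equation*}
    (RH^\Phi)_m=\lim_{n\to\infty}\omega_{n-1m}R_nH^\Phi_n,
\end{equation*}
and Theorem \ref{Thm1} tells us that this limit exists for every $m$. So it suffices to show that the sequence converges to $H^{\Phi'}_m$.

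Assuming the conjecture, I insert the decomposition $R_nH_n^\Phi=H_{n-1}^{\Phi'}+\delta H_{n-1}$. Since $\omega_{n-1m}$ is linear, this gives
\begin{equation*}
    \omega_{n-1m}R_nH^\Phi_n=\omega_{n-1m}H^{\Phi'}_{n-1}+\omega_{n-1m}\delta H_{n-1}.
\end{equation*}
For the first term I use the earlier proposition: $\Phi'\in\mathcal{B}^1$ is again invariant under $\Iso(\Lambda_\infty)\times\mathds{Z}_2$, because the RGT respects the symmetries, possibly after restricting to the preserved symmetry group. Hence $H^{\Phi'}\in\mathcal{V}_\infty$, so $\omega_{n-1m}H^{\Phi'}_{n-1}=H^{\Phi'}_m$ exactly, for every $n$.

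For the second term I use the trivial bound
\begin{equation*}
    \|\omega_{n-1m}f\|_\infty\le b^{-(n-1-m)}\|f\|_\infty,
\end{equation*}
which holds because the pushforward only evaluates $f$ on pulled-back configurations. Writing $b^{n-1}=|\Lambda_{n-1}|/|\Lambda_0|$, this gives
\begin{equation*}
    \|\omega_{n-1m}\delta H_{n-1}\|_\infty\le b^{m}|\Lambda_0|\,\frac{\|\delta H_{n-1}\|_\infty}{|\Lambda_{n-1}|}.
\end{equation*}
The right-hand side tends to $0$ as $n\to\infty$ by the conjecture, so $(RH^\Phi)_m=H^{\Phi'}_m$ for all $m$, which is the claimed identity in $\mathcal{V}_\infty$.

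The only genuine obstacle is that the statement rests on Conjecture \ref{Conj1}, so the corollary is conditional on it. A lesser point is checking that $H^{\Phi'}$ is a legitimate element of $\mathcal{V}_\infty$, which requires the symmetry invariance of $\Phi'$. Membership of $H^{\Phi'}$ in $\mathcal{V}_\phys$ is not needed for the identity itself; it follows a posteriori from Theorem \ref{Thm2}, since $H^{\Phi'}=RH^\Phi$.
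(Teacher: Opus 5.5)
Your proposal is correct and is essentially the paper's own proof: insert the decomposition from Conjecture \ref{Conj1}, use $\omega_{n-1m}H^{\Phi'}_{n-1}=H^{\Phi'}_m$ from the compatibility of $H^{\Phi'}$, and send the remainder to zero via $\|\omega_{n-1m}\delta H_{n-1}\|_\infty\le b^{-(n-1-m)}\|\delta H_{n-1}\|_\infty$ together with $\|\delta H_{n-1}\|_\infty/|\Lambda_{n-1}|\to0$. Your explicit tracking of the factor $|\Lambda_0|$, which the paper's formula silently takes to be $1$, and your remark that the corollary is conditional on the conjecture are both accurate.
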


\begin{proof}
    A quick computation shows that
    \begin{equation}
        (RH^\Phi)_m=\lim_{n\to\infty}\omega_{n-1m}R_nH_n^\Phi=\lim_{n\to\infty}\omega_{n-1m}\big(H_{n-1}^{\Phi'}+\delta H_{n-1}\big)=H_m^{\Phi'}+\lim_{n\to\infty}\frac{b^m\delta H_{n-1}\circ\pi_{n-1m}^*}{|\Lambda_{n-1}|}
    \end{equation}
    and hence
    \begin{equation}
        \|(RH^\Phi)_m-H_m^{\Phi'}\|_\infty\leq\lim_{n\to\infty}\frac{b_m\|\delta H_n\|_\infty}{|\Lambda_n|}=0\implies RH^\Phi=H^{\Phi'}.
    \end{equation}
\end{proof}

This would establish that in the pathology-free region of the infinite volume RGT $R_\infty$, the inverse-limit RGT $R$ coincides with $R_\infty$. This is of course a property one should expect a pathology-free extension of $R_\infty$ to have in order for it to be a physically relevant definition. It hinges entirely on Conjecture \ref{Conj1} which simply put tells us that for infinite volume interactions $\Phi$ that admit a renormalized interaction $\Phi'$, we can approximate $\Phi'$ by finite volume RGTs up to an error that scales like $o(|\Lambda|)$. The intuition behind this conjecture is that the RGT acts locally on the lattice and thus, although it can create arbitrarily large interactions out of even finite models, these decay fast enough such that their effects are not felt strongly in finite but large volumes. Of course, the pathologies tell us that this is not always the case. Here, we only conjecture that at least in the pathology-free region this intuition holds.\bigskip

Before we conclude this section, let us highlight one more result that further supports the physical relevance of this new approach:

\begin{theorem}
    The map $R:\mathcal{V}_\phys\to\mathcal{V}_\phys$ satisfies\footnote{Note that this limit is well-defined as $R_{n-1}R_n$ itself is a well-defined finite volume RGT. Therefore, the existence of this limit follows from Theorem \ref{Thm1}.}
    \begin{equation}\label{Eq4}
        (R^kH)_m=\lim_{n\to\infty}\omega_{(n-k)m}R_{n-k+1}R_{n-k+2}\cdots R_nH_n.
    \end{equation}
\end{theorem}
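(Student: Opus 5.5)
We argue by induction on $k$. The case $k=1$ is the definition of $R$ on $\mathcal{V}_\phys$, which is well-defined by Theorem \ref{Thm1}. For the inductive step, assume (\ref{Eq4}) holds for $k$ and set $G:=R^kH\in\mathcal{V}_\phys$, which lies in $\mathcal{V}_\phys$ by Theorem \ref{Thm2}. By definition,
\begin{equation*}
    (R^{k+1}H)_m=(RG)_m=\lim_{N\to\infty}\omega_{N-1\,m}R_NG_N .
\end{equation*}
By the induction hypothesis, $G_N=\lim_{n\to\infty}\omega_{(n-k)N}R_{n-k+1}\cdots R_nH_n$. We want to replace $G_N$ by the finite-$n$ approximant and exchange the two limits. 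Write $P^{(k)}_{n}:=R_{n-k+1}\cdots R_nH_n\in\mathcal{V}_{n-k}$ and take the diagonal choice $N=n-k$. We then aim to show
\begin{equation*}
    \Big\|\omega_{N-1\,m}R_NG_N-\omega_{N-1\,m}R_N\big(\omega_{(n-k)N}P^{(k)}_n\big)\Big\|_\infty\xrightarrow[n\to\infty]{}0
\end{equation*}
uniformly enough that the double limit collapses. For $N=n-k$ one has $\omega_{NN}=\mathrm{id}$, so the second term is exactly $\omega_{(n-k-1)m}R_{n-k}P^{(k)}_n=\omega_{(n-(k+1))m}P^{(k+1)}_n$, which is the right-hand side of (\ref{Eq4}) for $k+1$.

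The key estimate comes from Lemma \ref{Lem3} together with $\|\omega_{N-1\,m}f\|_\infty\le\|f\|_\infty$. These give
\begin{equation*}
    \|\omega_{N-1\,m}R_NG_N-\omega_{N-1\,m}R_NP^{(k)}_{N+k}\|_\infty\le\|G_N-P^{(k)}_{N+k}\|_\infty .
\end{equation*}
This alone is too weak, because $\|G_N-P^{(k)}_{N+k}\|_\infty$ is of order $|\Lambda_N|$ times an energy-density discrepancy. It must therefore be sharpened by the factor $b^{-(N-1-m)}$ gained from $\omega_{N-1\,m}$, exactly as in the proof of Theorem \ref{Thm3}. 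So we should bound the \emph{density} defect $\|G_N-P^{(k)}_{N+k}\|_\infty/b^N$ and show it tends to $0$.

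To control $G_N-P^{(k)}_{N+k}$, write $G_N=\lim_{n'}\omega_{(n'-k)N}P^{(k)}_{n'}$ and telescope in $n'$ from $N+k$ upward, as in Lemma \ref{Lem4}. The increment $\omega_{(n'-k)N}P^{(k)}_{n'}-\omega_{(n'-k-1)N}P^{(k)}_{n'-1}$ is governed by the defects: iterating the argument of Lemma \ref{Lem4} through the $k$ successive maps $R_j$, with Lemma \ref{Lem3} absorbing the differences at each stage, bounds it by a sum of terms $\|\Delta H_j\|_\infty/b^{j-N}$, with $j$ ranging over roughly $n'-k,\dots,n'$. For $H\in\mathcal{V}_\phys$ this gives
\begin{equation*}
    \|G_N-P^{(k)}_{N+k}\|_\infty\le C_k\,b^{N}\sum_{j\ge N}b^{(\alpha-1)j}=C'_k\,b^{\alpha N}.
\end{equation*}
Hence $b^{-(N-1-m)}\|G_N-P^{(k)}_{N+k}\|_\infty\le C'_kb^{m+1}b^{(\alpha-1)N}\to0$. (Alternatively, one can use the footnote's observation: $R_{n-k+1}\cdots R_n$ is itself a finite-volume block RGT with blocks of size $b^k$, and Theorem \ref{Thm1} applied with $b^k$ guarantees that the right-hand side converges.)

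The main obstacle is this intermediate bound. The composite $R_{n-k+1}\cdots R_n$ must be compatible with the decomposition into the fundamental domains $Q_i$, and one must check that the defect of each intermediate Hamiltonian $R_{j+1}\cdots R_nH_n$ stays controlled by the original defects $\Delta H$. I would handle this by treating the composite as a single block transformation with block size $b^k$. The fundamental-domain nesting assumption $\phi_n^{-1}(Q_i^{n-1})\subset Q_i^n$ passes to compositions, so the computation in Lemma \ref{Lem4} applies verbatim with $b$ replaced by $b^k$ and defects $\Delta^{(k)}H_n:=H_n-\sum_{i=1}^{b^k}H_{n-k}(\sigma_{Q_i})$. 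These defects are bounded by $\sum_{j=n-k+1}^n b^{n-j}\|\Delta H_j\|_\infty\le C_kb^{\alpha n}$.
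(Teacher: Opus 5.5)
Your argument is correct in outline, but it is organized differently from the paper's. You peel off the \emph{outermost} factor, $R^{k+1}H=R(R^kH)$. This forces you to compare $(R^kH)_N$ with the finite-volume composite $P^{(k)}_{N+k}=R_{N+1}\cdots R_{N+k}H_{N+k}$, i.e.\ to prove a $k$-step analogue of Lemma \ref{Lem4}.

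The paper instead peels off the \emph{innermost} factor, $R^{k+1}H=R^k(RH)$. Since $RH\in\mathcal{V}_\phys$ by Theorem \ref{Thm2}, the induction hypothesis applies to $RH$. The only discrepancy left is then $(RH)_N-R_{N+1}H_{N+1}$, which the single-step Lemma \ref{Lem4} already bounds by $O(b^{\alpha N})$. This error passes through the composite $R_{N-k+1}\cdots R_N$ by iterating Lemma \ref{Lem3}, and the prefactor $b^{-(N-k-m)}$ from $\omega_{(N-k)m}$ makes it vanish.

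So the paper needs no new estimate, while your route needs the composite version of Lemma \ref{Lem4}. In exchange, your route yields as a by-product the convergence of the right-hand side of (\ref{Eq4}) that the footnote asserts.

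For that composite lemma you can avoid the degree-$b^k$ fundamental domains and the defects $\Delta^{(k)}H_n$ altogether. As you note, the nesting $\phi_n^{-1}(Q_i^{n-1})\subset Q_i^n$ passes to compositions, so the composite kernel factorizes over the \emph{original} $b$ domains, $Q_i^n\to Q_i^{n-k}$. The proof of Lemma \ref{Lem4} then gives $\|\omega_{(n-k)(n-k-1)}P^{(k)}_n-P^{(k)}_{n-1}\|_\infty\le b^{-1}\|\Delta H_n\|_\infty$ with the original single-level defects. Telescoping yields $\|G_N-P^{(k)}_{N+k}\|_\infty\le C\,b^{\alpha(N+k)}$ directly.

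The $b^k$-block version you sketch would additionally require a nested family of degree-$b^k$ fundamental domains compatible with the composite block map. That family is not among the stated hypotheses, though it can be built from the same implicit compatibility between $\pi$ and $\phi$ that Lemma \ref{Lem4} already relies on.
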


\begin{proof}
    We want to prove this result using induction. Let us start by considering the case for $k=2$. The left hand side of Equation (\ref{Eq4}) is given by
    \begin{equation}
        (R^2H)_m=\lim_{k\to\infty}\omega_{k-1m}R_k(RH)_k=\lim_{k\to\infty}\omega_{k-1m}R_k(\lim_{n\to\infty}\omega_{n-1k}R_nH_k).
    \end{equation}
    Since $R_k$ is continuous, we can exchange it with the limit giving us
    \begin{equation}
        (RH^2)_m=\lim_{k\to\infty}\lim_{n\to\infty}\omega_{k-1m}R_k(\omega_{n-1k}R_nH_n).
    \end{equation}
    Comparing this with the right hand side, we are interested in the limit
    \begin{equation}
        \lim_{k\to\infty}\lim_{n\to\infty}\left\|\omega_{k-1m}R_k(\omega_{n-1k}R_nH_n)-\omega_{k-1m}R_k(R_{k+1}H_{k+1})\right\|_\infty.
    \end{equation}
    Applying Lemma \ref{Lem3}, we find that
    \begin{align}
        \left\|\omega_{k-1m}R_k(\omega_{n-1k}R_nH_n)-\omega_{k-1m}R_k(R_{k+1}H_{k+1})\right\|_\infty&\leq\frac{1}{b^{k-1-m}}\|R_k(\omega_{n-1k}R_nH_n)-R_k(R_{k+1}H_{k+1})\|_\infty\nonumber\\
        &\leq\frac{1}{b^{k-1-m}}\|\omega_{n-1k}R_nH_n-R_{k+1}H_{k+1}\|_\infty.
    \end{align}
    In order to bound the remaining expression we can use Lemma \ref{Lem4} together with the fact that $H\in\mathcal{V}_\phys$ giving us
    \begin{equation}
        \frac{1}{b^{k-1-m}}\|\omega_{n-1k}R_nH_n-R_{k+1}H_{k+1}\|_\infty\leq\frac{1}{b^{k-1-m}}\sum_{l=k+2}^n\frac{\|\Delta H_k\|_\infty}{b^{l-1-k}}\leq\frac{Cb^{k+1}}{b^{k-1-m}}\sum_{l=k+2}^nb^{(\alpha-1)l}.
    \end{equation}
    Carrying out the limit $n\to\infty$, we obtain
    \begin{equation}
        \lim_{n\to\infty}\left\|\omega_{k-1m}R_k(\omega_{n-1k}R_nH_n)-\omega_{k-1m}R_k(R_{k+1}H_{k+1})\right\|_\infty\leq Cb^{m+2}\frac{b^{(\alpha-1)(k+2)}}{1-b^{\alpha-1}}.
    \end{equation}
    In the limit $k\to\infty$ this vanishes due to $(\alpha-1)<0$ which completes the proof for $k=2$. Now, let us assume the statement is true for all $k\leq N$ for some $N\geq 2$. We then must prove that it also holds for $k=N+1$. First, we note that
    \begin{equation}
        R^{N+1}H=R^N(RH).
    \end{equation}
    By the assumption of the induction, we can then write this as
    \begin{align}
        (R^N(RH))_m=\lim_{k\to\infty}\omega_{(k-N)m}R_{k-N+1}\cdots R_{k}(RH)_k\nonumber\\
        =\lim_{k\to\infty}\omega_{(k-N)m}R_{k-N+1}\cdots R_{k}(\lim_{n\to\infty}\omega_{n-1k}R_nH_n)\\
        =\lim_{k\to\infty}\lim_{n\to\infty}\omega_{(k-N)m}R_{k-N+1}\cdots R_{k}\omega_{n-1k}R_nH_n.\nonumber
    \end{align}
    As a direct consequence of Lemma \ref{Lem3} we find that
    \begin{equation}
        \|R_{k-N+1}\cdots R_{k}H_k-R_{k-N+1}\cdots R_{k}G_k\|_\infty\leq\|H_k-G_k\|_\infty.
    \end{equation}
    Thus, the proof that
    \begin{equation}
        \lim_{k\to\infty}\lim_{n\to\infty}\omega_{(k-N)m}R_{k-N+1}\cdots R_{k}\omega_{n-1k}R_nH_n=\lim_{k\to\infty}\omega_{(k-N)m}R_{k-N+1}\cdots R_{k}R_{k+1}H_{k+1}
    \end{equation}
    remains essentially the same as in the case for $k=2$. The only difference is that we get an additional factor of $b^N$ in the final estimate, which, however, does not alter the fact that the difference will vanish in the limit $k\to\infty$. This completes the induction.
\end{proof}

This theorem establishes that the regularization of the RGT using the projections $\omega_{nm}$ is compatible with the semi-group structure of the infinite volume RG-kernels. In other words; It does not matter whether we first build the kernel of two consecutive RG-steps and then build the map $R:\mathcal{V}_\phys\to\mathcal{V}_\phys$ or if we instead build $R$ using the kernel of a single step and then consider the map $R^2$. We can also think of it as
\begin{equation}
    R_{T^2}=(R_T)^2.
\end{equation}
It seems plausible that with more effort, we could establish the more general result
\begin{equation}
    R_{T\tilde T}=R_{T}R_{\tilde T}.
\end{equation}
Although, one must be very careful when mixing different block geometries as this framework so far does not allow for that.

\section{Where are we standing?}\label{Sec3}

With the help of the construction of $\mathcal{V}_\infty$ using graph coverings, we managed to obtain a version of the RGT $R:\mathcal{V}_\phys\to \mathcal{V}_\phys$ that is well-defined and continuous on a space of Hamilton functions. This gives us a version of the RGT that is much more in line with Wilson's vision than traditional attempts on the infinite lattice which are plagued by pathologies. Further, we have established a canonical correspondence between conventional infinite volume interactions like the nearest-neighbour Ising model and points in $\mathcal{V}_\phys$ that allows us to start investigating the flow of $R$ in physically relevant models. The next logical steps from here on out are i) establishing Conjecture \ref{Conj1} as a rigorous result and then ii) start to investigate the properties of the flow of $R$ induced in $\mathcal{V}_\phys$. In particular, it is desirable to establish the following properties:
\begin{enumerate}
    \item[a)] The $T=\infty$ fixed point $H=0$ is stable under $R$ and the corresponding basin of attraction contains $H_\beta^\mathrm{Ising}$ for all $\beta<\beta_c$.
    \item[b)] There exists a region in $\mathcal{V}_\phys$ that escapes towards infinity (the $T=0$ fixed point lies at infinity) and contains $H_\beta^\mathrm{Ising}$ for all $\beta>\beta_c$.
    \item[c)] There exists at least one non-trivial fixed point that attracts $H_{\beta_c}^\mathrm{Ising}$.
\end{enumerate}
It should be noted that when we talk about the RG-flow we are interested in the action of $R$ onto the coset $\mathcal{V}_\phys/\mathds{R}$ (see the discussion below). Beyond these properties related to the dynamics of $R$ one is, of course, also interested in more general concepts of Wilson's RG like the extraction of critical exponents and whether they match the known results in cases like the $2\mathrm{D}$ Ising model. In any case, this framework opens up many new pathways to explore towards a fully rigorous version of Wilson's renormalization group. 

\bigskip

Before we conclude this paper, let us dive into some preliminary ideas towards establishing a). Let us denote the quotient space $\hat {\mathcal{V}}:=\mathcal{V}_\phys/\mathds{R}$. $\mathds{R}$ acts onto $\mathcal{V}_\infty$ via
\begin{equation}
    (H+C)_n:=H_n+b^nC,\quad C\in\mathds{R}.
\end{equation}
$\hat{\mathcal{V}}$ is well-defined since $H\in \mathcal{V}_\phys\implies H+C\in \mathcal{V}_\phys,\,\,C\in\mathds{R}$. We can convince ourselves of this fact by computing
\begin{equation}\label{Eq2}
    H_n(\sigma)=\sum_{i=1}^bH_{n-1}(\sigma_{Q_i})+\Delta H_n(\sigma)\implies H_n(\sigma)+C=\sum_{i=1}^b\left[H_{n-1}(\sigma_{Q_i})+\frac{C}{b}\right]+\Delta H_n(\sigma).
\end{equation}
Therefore, adding a constant to the Hamilton function leaves $\Delta H_n$ invariant and hence also its scaling. Further, the RGT $R:\mathcal{V}_\phys\to \mathcal{V}_\phys$ extends to $\hat{\mathcal{V}}$ simply due to the fact that $R(H+C)=R(H)+bC$. In order to speak about convergence of $R$, we should also think about how to extend the norm on $\mathcal{V}_\phys$ to a norm on $\hat{\mathcal{V}}$. We define
\begin{equation}
    \|[\hat H]\|\hat{\,}:=\inf_{H\in[\hat H]}\|H\|.
\end{equation}

\begin{lemma}\label{Lem2}
    There exists a unique element $H\in[\hat H]$ such that $\|H\|=\|[\hat H]\|\hat{\,}$.
\end{lemma}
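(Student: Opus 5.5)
The plan is to compute the function $C\mapsto\|H+C\|$ explicitly on a class $[\hat H]$ and see that it is a translated absolute value plus a constant, so its minimiser exists and is unique. Fix a representative $H\in[\hat H]$; every other element of the class has the form $H+C$ with $C\in\mathds{R}$. This element lies in $\mathcal{V}_\phys$ by the computation around Equation (\ref{Eq2}), so its norm is defined. For each $n$ I will use the normalised function $h_n(\sigma):=H_n(\sigma)/|\Lambda_n|$ on $\Omega_{\Lambda_n}$, together with
\begin{equation*}
M_n:=\max_\sigma h_n(\sigma),\qquad m_n:=\min_\sigma h_n(\sigma).
\end{equation*}
Since $|\Lambda_n|=b^n|\Lambda_0|$, the constant shift contributes $b^nC/|\Lambda_n|=C/|\Lambda_0|=:c$ at every level. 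Hence
\begin{equation*}
\frac{\|H_n+b^nC\|_\infty}{|\Lambda_n|}=\max\bigl(|M_n+c|,|m_n+c|\bigr)=\frac{M_n-m_n}{2}+\Bigl|c+\frac{M_n+m_n}{2}\Bigr|,
\end{equation*}
where the last equality is the elementary identity $\max(|x|,|y|)=\tfrac{|x-y|}{2}+\tfrac{|x+y|}{2}$ applied with $x=M_n+c$ and $y=m_n+c$.

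Next I show that $M_n$ and $m_n$ converge separately. This uses the same pullback argument the paper gave for monotonicity of $\|H_n\|_\infty/|\Lambda_n|$. Compatibility $\omega_{nn-1}H_n=H_{n-1}$ means $h_n(\pi_{nn-1}^*\sigma)=h_{n-1}(\sigma)$ for all $\sigma$. So the range of $h_{n-1}$ is contained in the range of $h_n$, which gives that $M_n$ is nondecreasing and $m_n$ is nonincreasing. Both are bounded in absolute value by $\|H_n\|_\infty/|\Lambda_n|\le\|H\|<\infty$. Therefore $M_n\to M$ and $m_n\to m$ with $m\le M$.

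Passing to the limit $n\to\infty$ in the displayed formula, which is legitimate because the right-hand side is continuous in $(M_n,m_n)$, gives
\begin{equation*}
\|H+C\|=\frac{M-m}{2}+\Bigl|\frac{C}{|\Lambda_0|}+\frac{M+m}{2}\Bigr|.
\end{equation*}
As a function of $C$ this is minimised exactly at $C^*=-|\Lambda_0|\,(M+m)/2$. The minimum value is $(M-m)/2$, so the infimum defining $\|[\hat H]\|\hat{\,}$ is attained. The minimiser is unique because $C\mapsto|C/|\Lambda_0|+a|$ vanishes at a single point and is strictly larger elsewhere. The answer does not depend on the chosen representative, since replacing $H$ by $H+C_0$ just shifts $M,m$ by $c_0=C_0/|\Lambda_0|$ and shifts the minimiser accordingly. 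The element sought is therefore $H+C^*$.

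The only step needing care is the separate convergence of $M_n$ and $m_n$. Convergence of $\|H_n\|_\infty/|\Lambda_n|=\max(|M_n|,|m_n|)$ alone would not suffice to make the limit a function of $C$ with a unique minimiser. The monotonicity argument above supplies this separately for the maximum and the minimum. I also need to keep track of the normalisation $(H+C)_n=H_n+b^nC$ against $|\Lambda_n|=b^n|\Lambda_0|$, so that the shift is the same constant $c$ at every level.
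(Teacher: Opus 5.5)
Your proof is correct and follows the paper's own argument. The paper likewise shows that the normalised maximum and minimum $T(H')=\lim_n \max H'_n/|\Lambda_n|$ and $B(H')=\lim_n \min H'_n/|\Lambda_n|$ exist separately, uses $\|H'\|=\max\{|T(H')|,|B(H')|\}$, and shifts to centre them, with uniqueness because any further shift increases one of $|T|,|B|$. Your formula $\|H+C\|=\tfrac{M-m}{2}+\bigl|C/|\Lambda_0|+\tfrac{M+m}{2}\bigr|$ packages the same reasoning, and your tracking of the factor $|\Lambda_0|$ is more careful than the paper, which writes $T(H'+C)=T(H')+C$ as if $|\Lambda_0|=1$.
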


\begin{proof}
    Let us start with some arbitrary $H'\in[\hat H]$. Similarly to how we proved that the norm $\|\cdot\|$ on $\mathcal{V}_\phys$ is well-defined, we can prove that the extremal points 
    \begin{equation}
        T(H'):=\lim_{n\to\infty}\frac{\max H'_n}{|\Lambda_n|}\qquad;\qquad B(H'):=\lim_{n\to\infty}\frac{\min H'_n}{|\Lambda_n|}
    \end{equation}
    both exist. By construction it is clear that $\|H'\|=\max\{|T(H')|,|B(H')|\}$. Further, we find that $T(H'+C)=T(H')+C$ and likewise for $B(H')$. Thus, we conclude that
    \begin{equation}
        \|[\hat H]\|\hat{\,}=\frac{1}{2}(T(H')-B(H'))=\left\|H'-\frac{1}{2}(T(H')+B(H')\right\|.
    \end{equation}
    Therefore, the Hamiltonian $H=H'-\frac{1}{2}(T(H')+B(H')\in\|[\hat H]\|\hat{\,}$ satisfies the statement. On top of that, we note that 
    \begin{equation}
        T(H)=-B(H)=\|[\hat H]\|\hat{\,}.
    \end{equation}
    Therefore, by adding a constant to $H$, we will either shift $T(H+C)$ upwards if the constant is positive or $B(H+C)$ downwards if it is negative. In either case we find that $\|H+C\|>\|H\|$ such that $H$ is the unique element in $[\hat H]$ with the property $\|H\|=\|[\hat H]\|\hat{\,}$.
\end{proof}

\begin{corollary}
    The extension $R:\hat{\mathcal{V}}\to \hat{\mathcal{V}}$ is continuous regarding $\|\cdot\|\hat{\,}$.
\end{corollary}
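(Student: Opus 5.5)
The plan is to reduce continuity on the quotient to the estimate already obtained in Theorem \ref{Thm3}, using Lemma \ref{Lem2} to pick good representatives. First I check that $R$ descends to $\hat{\mathcal{V}}$. By Equation (\ref{Eq2}) the class $[H]$ is stable inside $\mathcal{V}_\phys$. On finite volume, $R_n(H_n+b^nC)=R_nH_n+b^nC$, since the constant factors out of the sum $\sum_s T_n(\sigma|s)e^{-H_n(s)}$. Applying $\omega_{n-1,m}$ turns $b^nC$ into $b^{m+1}C$, so $R(H+C)=RH+bC$, and $R$ is well defined on classes.

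Next, take classes $[\hat H],[\hat G]$. By Lemma \ref{Lem2}, choose representatives $H,G$ with $\|H\|=\|[\hat H]\|\hat{\,}$ and $\|G\|=\|[\hat G]\|\hat{\,}$. The real issue is that continuity should be measured by the quotient distance $\|[\hat G]-[\hat H]\|\hat{\,}=\inf_{C}\|G-H-C\|$, not by the norms of the two classes separately. So I do not use the minimizing representatives of $[\hat H]$ and $[\hat G]$ individually. Instead I apply Lemma \ref{Lem2} to the difference class $[\hat G-\hat H]$, which lies in $\hat{\mathcal{V}}$ because $\mathcal{V}_\phys$ is a vector space: the $\Delta$-bounds add, taking the larger $\alpha$. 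This gives a constant $C_0$ with $\|G-H-C_0\|=\|[\hat G]-[\hat H]\|\hat{\,}$. Set $G':=G-C_0$, which is in the same class as $G$.

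Now run the argument of Theorem \ref{Thm3} on $G'$ and $H$. Lemma \ref{Lem3} gives $\|R_nG'_n-R_nH_n\|_\infty\le\|G'_n-H_n\|_\infty$. Using $\|G'_n-H_n\|_\infty\le |\Lambda_n|\,\|G'-H\|$ (the sequence $\|\cdot\|_\infty/|\Lambda_n|$ is monotone increasing) and passing to the limit in $n$, this yields $\|RG'-RH\|\le b\,\|G'-H\|$, where the constant depends on the normalisation of $|\Lambda_n|$ versus $b^n$. Finally $RG'$ represents $R[\hat G]$, since $RG'=RG-bC_0$. Hence
\[
\|R[\hat G]-R[\hat H]\|\hat{\,}\le\|RG'-RH\|\le b\,\|[\hat G]-[\hat H]\|\hat{\,},
\]
so the extension is Lipschitz and therefore continuous.

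I expect the main obstacle to be the same one Theorem \ref{Thm3} glosses over: turning the level-wise sup bound into a bound in the limiting norm $\|\cdot\|$. One has to check that $\|(RG'-RH)_m\|_\infty/|\Lambda_m|$ is controlled uniformly in $m$ by the bound above, which requires tracking the factor $b^{m+1}$ against $|\Lambda_m|$. I would handle this by writing $\|(RG')_m-(RH)_m\|_\infty\le\lim_n b^{m+1-n}\|G'_n-H_n\|_\infty$, dividing by $|\Lambda_m|=b^m|\Lambda_0|$, and then taking $m\to\infty$.
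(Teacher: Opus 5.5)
Your proof is correct. It shares the paper's skeleton (Theorem~\ref{Thm3} plus $\|[\hat L]\|\hat{\,}\le\|L\|$), but it chooses representatives differently, and that is exactly where the paper's argument fails.

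The paper takes the individually norm-minimizing representatives $H,G$ of $[\hat H],[\hat G]$ from Lemma~\ref{Lem2} and asserts $\|G-H\|=\|[\hat G]-[\hat H]\|\hat{\,}$. Only $\|G-H\|\ge\|[\hat G]-[\hat H]\|\hat{\,}$ is automatic, because the difference of two centred representatives (each with $T=-B$) need not be centred. A counterexample lives on the tori of the $\mathds{Z}^2$ example. Take $G$ the unit-coupling ferromagnetic nearest-neighbour Hamiltonian and $H$ the unit-coupling ferromagnetic diagonal-neighbour one. Both have $T=-B=2$. Yet $G-H$ has $T=4$ (checkerboard) and $B=-2$ (stripes), so $\|G-H\|=4$ while $\|[\hat G]-[\hat H]\|\hat{\,}=3$.

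Your move avoids this: you leave $H$ fixed and shift $G$ by the constant $C_0$ that centres the \emph{difference} class. Together with your check that $R(G-C_0)=RG-bC_0$ still represents $R[\hat G]$, this is what makes the reduction to Theorem~\ref{Thm3} legitimate. You only need the existence half of Lemma~\ref{Lem2}; an $\varepsilon$-approximate minimizer would also do.

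The paper's route can be repaired, since $|T(G)-T(H)|,|B(G)-B(H)|\le\|G-H\|$ makes the centring map $L\mapsto L-\frac12(T(L)+B(L))$ $2$-Lipschitz. That repair only yields the constant $2b$, whereas yours gives the constant $b$ the paper states.

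Your explicit estimate $\|(RG')_m-(RH)_m\|_\infty\le b\,|\Lambda_m|\,\|G'-H\|$ also supplies the division by $|\Lambda_m|$ that the proof of Theorem~\ref{Thm3} leaves implicit. It shows the constant is exactly $b$ whatever $|\Lambda_0|$ is, so your hedge about the normalisation of $|\Lambda_n|$ versus $b^n$ is unnecessary.
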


\begin{proof}
    Let $[\hat H],[\hat G]\in\hat{\mathcal{V}}$. Further, let $H,G\in \mathcal{V}_\phys$ be the unique representatives from the previous lemma. Then we find that $\|G-H\|=\|[\hat G]-[\hat H]\|\hat{\,}$. Further, in the proof of Theorem \ref{Thm3} we had shown that $R:\mathcal{V}_\phys\to \mathcal{V}_\phys$ is Lipschitz continuous with constant $b$ such that $\|R(G)-R(H)\|\leq b\|[\hat G]-[\hat H]\|\hat{\,}$. At last, we note that under the projection $\mathcal{V}_\phys\to \hat{\mathcal{V}}$ we find that $\|[\hat L]\|\hat{\,}\leq\|L\|$ and hence
    \begin{equation}
        \|[\hat{R(G)}]-[\hat{R(H)}]\|\hat{\,}\leq b\|[\hat G]-[\hat H]\|\hat{\,}.
    \end{equation}
\end{proof}

Now let $[\hat H]\in\hat{\mathcal{V}}$ such that $\|[\hat H]\|\hat{\,}\ll1$. Further, let $H\in[\hat H]$ be some representative. We shall set $h_n:=b^{-n}H_n$. The first step is to compute
\begin{equation}
    \omega_{n-1m}R_nH_n=\frac{-1}{b^{n-1-m}}\ln\left(\sum_sT_n(\pi_{n-1m}^*\sigma|s)e^{-b^nh_n(s)}\right).
\end{equation}
Before we can do that, however, we must establish the following result:

\begin{lemma}
    We find
    \begin{equation}
        \sum_sT_n(\sigma|s)=2^{|\Lambda_n|-|\Lambda_{n-1}|}=:\mathcal{N}_n\,.
    \end{equation}
    In particular, $\sum_sT_n(\sigma|s)$ is independent of $\sigma$.
\end{lemma}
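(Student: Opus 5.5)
The plan is to exploit the product structure of the kernel $T_n=\prod_{x\in\Lambda_{n-1}}t(\sigma_x|s_{\phi_n^{-1}(x)})$. The blocks $\phi_n^{-1}(x)$, $x\in\Lambda_{n-1}$, partition $V(\Lambda_n)$. They are pairwise disjoint because $\phi_n$ is a map, and they cover $V(\Lambda_n)$ because every vertex lies in exactly one block. Hence a configuration $s\in\Omega_{\Lambda_n}$ is the same as a tuple of block configurations $(s_{\phi_n^{-1}(x)})_{x\in\Lambda_{n-1}}$. The sum over $s$ then factorises:
\begin{equation*}
    \sum_{s\in\Omega_{\Lambda_n}}T_n(\sigma|s)=\prod_{x\in\Lambda_{n-1}}\;\sum_{s_B\in\{-1,1\}^{\phi_n^{-1}(x)}}t(\sigma_x|s_B).
\end{equation*}

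Next I evaluate a single block factor $\sum_{s_B}t(\sigma_x|s_B)$ and show it does not depend on $\sigma_x$. The key input is rule R1, namely the $\mathds{Z}_2$-invariance $t(\sigma_x|s_B)=t(-\sigma_x|-s_B)$. The map $s_B\mapsto -s_B$ is a bijection of the block configuration space, so
\begin{equation*}
    \sum_{s_B}t(+1|s_B)=\sum_{s_B}t(-1|-s_B)=\sum_{s_B}t(-1|s_B).
\end{equation*}
Rule R2 gives $t(+1|s_B)+t(-1|s_B)=1$ for every $s_B$. Adding the two equal sums therefore yields $2^{|B|}$, so each of them equals $2^{|B|-1}$, independently of $\sigma_x$.

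Multiplying over $x\in\Lambda_{n-1}$ gives
\begin{equation*}
    \prod_{x}2^{|\phi_n^{-1}(x)|-1}=2^{\sum_x|\phi_n^{-1}(x)|-|\Lambda_{n-1}|}=2^{|\Lambda_n|-|\Lambda_{n-1}|}.
\end{equation*}
The last equality uses once more that the blocks partition $V(\Lambda_n)$. This expression is manifestly independent of $\sigma$.

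The only point needing care is confirming that the finite-volume block maps $\phi_n$ really partition $\Lambda_n$ into blocks on which the same $\mathds{Z}_2$-symmetric rule $t$ acts. This follows from compatibility condition i), which says the $\phi_n$ are induced from $\phi$ via the coverings, together with the definition of $T_n$ using the same $t$ at every block. I expect no further obstacle, and the argument does not need the block sizes to be equal.
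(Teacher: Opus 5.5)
Your proof is correct, but it is organized differently from the paper's.

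\textbf{Your route.} You factorize $\sum_s T_n(\sigma|s)$ over the partition $\{\phi_n^{-1}(x)\}_{x\in\Lambda_{n-1}}$. You then evaluate each block factor as $2^{|\phi_n^{-1}(x)|-1}$, using the $\mathds{Z}_2$ covariance $t(\sigma_x|s_B)=t(-\sigma_x|-s_B)$ together with the normalization R2. Independence of $\sigma$ and the value $\mathcal{N}_n$ come out of one computation.

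\textbf{The paper's route.} The paper argues globally in two steps. First, it shows that $\sigma\mapsto\sum_sT_n(\sigma|s)$ is invariant under flipping any single spin $\sigma_x$. It does this by composing with the bijection $f$ of $\Omega_{\Lambda_n}$ that flips the whole block $\phi_n^{-1}(x)$, so that $T_n(\tilde\sigma|f(s))=T_n(\sigma|s)$, and concludes the function is constant. Second, it obtains the constant by double counting: $2^{|\Lambda_n|}=\sum_s\sum_\sigma T_n(\sigma|s)=2^{|\Lambda_{n-1}|}\mathcal{N}_n$.

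\textbf{Comparison.} Both proofs use the same ingredients: the block-flip bijection and normalization of the kernel. Your version is more explicit and gives the per-block factor directly. The paper's version never factorizes the sum over $s$. It would therefore go through for any normalized kernel that is covariant under each joint flip of $\sigma_x$ and its block, whether or not the kernel has product form.

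\textbf{Minor point.} Your closing appeal to compatibility condition i) is not needed. That the blocks partition $V(\Lambda_n)$ follows simply from $\phi_n$ being a map defined on all of $V(\Lambda_n)$, as you had already observed.
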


\begin{proof}
    Let us first prove that $\sum_sT_n(\sigma|s)$ is in fact independent of $\sigma$ and then use this fact in order to compute its value. Here we may use that a function $F:\Omega_{\Lambda_{n-1}}\to\mathds{R}$ is constant if and only if it is invariant under flipping an arbitrary spin. Thus, let us consider some $x\in\Lambda_{n-1}$ and $\sigma\in\Omega_{\Lambda_{n-1}}$. We denote the configuration $\tilde\sigma$ defined by $\tilde\sigma_x=-\sigma_x$ while $\tilde\sigma_y=\sigma_y$ for all $y\in\Lambda_{n-1}\backslash\{x\}$. We then define the automorphism
    \begin{equation}
        f:\Omega_{\Lambda_n}\to\Omega_{\Lambda_n},\qquad f(s)_z:=\begin{cases}
            s_z&z\notin\phi_n^{-1}(x)\\
            -s_z&z\in\phi_n^{-1}(x)
        \end{cases}.
    \end{equation}
    By $\mathds{Z}_2$-invariance of the single block kernel $t(\sigma_x|s_{\phi_n^{-1}(x)})$ we find that
    \begin{equation}
        T_n(\tilde\sigma|f(s))=T_n(\sigma|s).
    \end{equation}
    We conclude that
    \begin{equation}
        \sum_sT_n(\sigma|s)=\sum_sT_n(\tilde\sigma|f(s))=\sum_sT_n(\tilde\sigma|s)
    \end{equation}
    where we have used that summing over $s$ is the same as summing over $f(s)$. Now, in order to compute $\mathcal{N}_n$ we may consider
    \begin{equation}
        2^{|\Lambda_n|}=\sum_s\sum_\sigma T_n(\sigma|s)=\sum_\sigma\sum_sT_n(\sigma|s)=2^{|\Lambda_{n-1}|}\mathcal{N}_n.
    \end{equation}
\end{proof}

Let us now use this result in
\begin{align}
    \ln\left(\sum_sT_n(\pi_{n-1m}^*\sigma|s)e^{-b^nh_n(s)}\right)&=\ln\left(\mathcal{N}_n\sum_s\frac{T_n(\pi_{n-1m}^*\sigma|s)}{\mathcal{N}_n}e^{-b^nh_n(s)}\right)\nonumber\\
    &=\ln(\mathcal{N}_n)+\ln\left(\sum_s\frac{T_n(\pi_{n-1m}^*\sigma|s)}{\mathcal{N}_n}e^{-b^nh_n(s)}\right).
\end{align}
As we are working with equivalence classes in $\hat{\mathcal{V}}$, we may throw away the constant contribution coming from the $\ln(\mathcal{N}_n)$ term. Therefore, we consider
\begin{equation}
    \frac{-1}{b^{n-1-m}}\ln\left(\sum_sP_{\pi_{n-1m}^*\sigma}(s)e^{-b^nh_n(s)}\right),\qquad P_{\pi_{n-1m}^*\sigma}(s):=\frac{T_n(\pi_{n-1m}^*\sigma|s)}{\mathcal{N}_n},
\end{equation}
where $P_{\pi_{n-1m}^*\sigma}(s)$ is a probability measure on $\Omega_{\Lambda_n}$. We recognize the $\ln$ term as the cumulant generating function for the random variable $h_n$ evaluated at $t=-b^n$ regarding the probability measure $P_{\pi_{n-1m}^*\sigma}(s)$. This allows us to expand (keep in mind that we removed the constant offset)
\begin{equation}
    \omega_{n-1m}R_nH_n=b^{m+1}\sum_{k=1}^\infty\frac{(-b^n)^{k-1}}{k!}\kappa_k
\end{equation}
where the $\kappa_k$ are the cumulants of $h_n$ regarding $P_{\pi_{n-1m}^*\sigma}(s)$. The $k$th cumulant decays like $\|h_n\|_\infty^k$. By choosing the initial $H$ to be the unique element in $[\hat H]$ satisfying the property from Lemma \ref{Lem2}, we achieve that $\|h_n\|_\infty=\|[\hat H]\|\hat{\,}$. Therefore, the higher order cumulants are small in $\|[\hat H]\|\hat{\,}$. However, there is another issue arising, namely, the scaling with $b^n$ which becomes large very quickly. Thus, the next step would be to use properties from the kernels $T_n$ in order to prove that the cumulants suppress this $b^n$ scaling and the higher order terms really can be treated as corrections that are small $\|[\hat H]\|$. Then, we could start to investigate whether the linear term truly is a contraction or not. For now, this remains a project for the future.

\section*{Acknowledgements}

The author would like to thank his supervisor Uwe-Jens Wiese for initiating this project as part of his master thesis and continuous support ever since. Further gratitude is due to Urs Wenger, Sebastian Baader, Matthias Blau and Nico Scheidegger for fruitful discussions. A special thank you also to Aernout van Enter for reading a previous version of this manuscript and providing helpful comments.

\printbibliography

@book{friedli_velenik_2017,
place={Cambridge},
title={Statistical Mechanics of Lattice Systems: A Concrete Mathematical Introduction},
DOI={10.1017/9781316882603},
ISBN={978-1-107-18482-4},
publisher={Cambridge University Press},
author={Friedli, Sacha and Velenik, Yvan},
year={2017}
}

@article{van_Enter_1993,
   title={Regularity properties and pathologies of position-space renormalization-group transformations: Scope and limitations of Gibbsian theory},
   volume={72},
   ISSN={1572-9613},
   url={http://dx.doi.org/10.1007/BF01048183},
   DOI={10.1007/bf01048183},
   number={5–6},
   journal={Journal of Statistical Physics},
   publisher={Springer Science and Business Media LLC},
   author={van Enter, Aernout C. D. and Fernández, Roberto and Sokal, Alan D.},
   year={1993},
   month={sep}, pages={879–1167} }

@article{Griffiths1979,
  author       = {Griffiths, Robert B. and Pearce, Paul A.},
  title        = {Mathematical properties of position-space renormalization-group transformations},
  journal      = {Journal of Statistical Physics},
  year         = {1979},
  volume       = {20},
  number       = {5},
  pages        = {499--545},
  doi          = {10.1007/BF01012897},
  url          = {https://doi.org/10.1007/BF01012897},
  issn         = {1572-9613}
}

@incollection{Israel1981,
  author       = {Israel, R. B.},
  title        = {Banach algebras and Kadanoff transformations},
  booktitle    = {Random Fields (Esztergom, 1979), Vol. II},
  editor       = {Fritz, J. and Lebowitz, J. L. and Szász, D.},
  pages        = {593--608},
  publisher    = {North-Holland},
  location     = {Amsterdam},
  year         = {1981}
}

@article{doi:10.1137/1113026,
author = {Dobruschin, P. L.},
title = {The Description of a Random Field by Means of Conditional Probabilities and Conditions of Its Regularity},
journal = {Theory of Probability \& Its Applications},
volume = {13},
number = {2},
pages = {197-224},
year = {1968},
doi = {10.1137/1113026},

URL = { 
    
        https://doi.org/10.1137/1113026
    },
eprint = { 
    
        https://doi.org/10.1137/1113026
    }
}

@article{Lanford1969,
  author       = {Lanford, O. E. and Ruelle, D.},
  title        = {Observables at infinity and states with short range correlations in statistical mechanics},
  journaltitle = {Communications in Mathematical Physics},
  date         = {1969-09-01},
  volume       = {13},
  number       = {3},
  pages        = {194--215},
  issn         = {1432-0916},
  doi          = {10.1007/BF01645487},
  url          = {https://doi.org/10.1007/BF01645487}
}

@article{Kadanoff,
  title = {Scaling laws for Ising models near ${T}_{c}$},
  author = {Kadanoff, Leo P.},
  journal = {Physics Physique Fizika},
  volume = {2},
  issue = {6},
  pages = {263--272},
  numpages = {10},
  year = {1966},
  month = {Jun},
  publisher = {American Physical Society},
  doi = {10.1103/PhysicsPhysiqueFizika.2.263},
  url = {https://link.aps.org/doi/10.1103/PhysicsPhysiqueFizika.2.263}
}

@article{WilsonI,
  title = {Renormalization Group and Critical Phenomena. I. Renormalization Group and the Kadanoff Scaling Picture},
  author = {Wilson, Kenneth G.},
  journal = {Phys. Rev. B},
  volume = {4},
  issue = {9},
  pages = {3174--3183},
  numpages = {0},
  year = {1971},
  month = {Nov},
  publisher = {American Physical Society},
  doi = {10.1103/PhysRevB.4.3174},
  url = {https://link.aps.org/doi/10.1103/PhysRevB.4.3174}
}

@article{WilsonII,
  title = {Renormalization Group and Critical Phenomena. II. Phase-Space Cell Analysis of Critical Behavior},
  author = {Wilson, Kenneth G.},
  journal = {Phys. Rev. B},
  volume = {4},
  issue = {9},
  pages = {3184--3205},
  numpages = {0},
  year = {1971},
  month = {Nov},
  publisher = {American Physical Society},
  doi = {10.1103/PhysRevB.4.3184},
  url = {https://link.aps.org/doi/10.1103/PhysRevB.4.3184}
}

@article{Yin,
    author = {Yin, Mey},
    title = {Spectral Properties of the Renormalization Group at Infinite Temperature},
    journal = {Communications in Mathematical Physics},
    volume = {304},
    pages = {175 - 186},
    date = {2011-05-01},
    doi = {10.1007/s00220-011-1201-5},
    url = {https://doi.org/10.1007/s00220-011-1201-5},
}

@article{Slava,
    author = {Kennedy, Tom and Rychkov, Slava},
    title = {Tensor RG Approach to High-Temperature Fixed Point},
    journal = {Journal of Statistical Physics},
    date = {2022-05-02},
    volume = {187},
    doi = {10.1007/s10955-022-02924-4},
    url = {https://doi.org/10.1007/s10955-022-02924-4},
    
}

\end{document}